\newtheorem{theorem}{Theorem}
\newtheorem{proposition}{Proposition}
\newtheorem{notation}{Notation}
\newenvironment{textbmatrix}{   \setlength{\arraycolsep}{2.5pt}%
                                                                \big[\begin{matrix}}{\end{matrix}\big]%
                                                                \raisebox{0.08ex}{\vphantom{M}}}
\def\be{\begin{equation}}
\def\ee{\end{equation}}
\def\een{\nonumber \end{equation}}
\def\mat{\begin{bmatrix}}
\def\emat{\end{bmatrix}}
\def\btm{\begin{textbmatrix}}
\def\etm{\end{textbmatrix}}
\def\ba#1\ea{\begin{align}#1\end{align}}
\def\bs#1\es{\begin{split}#1\end{split}}
\def\bg#1\eg{\begin{gather}#1\end{gather}}
\def\bi#1\ei{\begin{itemize}#1\end{itemize}}
\newcommand{\safemath}[2]{\newcommand{#1}{\ensuremath{#2}\xspace}}
\newcommand{\lefto}{\mathopen{}\left}
\DeclareMathOperator{\Varop}{\mathbb{V}\!\mathrm{ar}} % variance operator
\safemath{\interior}{\mathrm{Int}}                       % interior of a set
\safemath{\dfn}{:=}                                                     % definition
\safemath{\dirac}{\delta}                                       % Dirac delta
\safemath{\No}{N_0}                                                     % noise spectral density
\safemath{\Es}{E_s}                                                     % energy per symbol
\safemath{\Eb}{E_b}                                                     % energy per bit
\safemath{\EbNo}{\frac{\Eb}{\No}}
\safemath{\EsNo}{\frac{\Es}{\No}}
\DeclareMathOperator{\CHop}{\ensuremath{\mathbb{H}}} % channel operator
\safemath{\tvir}{h_{\CHop}}                                     % time-varying impulse response
\safemath{\tvtf}{L_{\CHop}}                                     %       transfer function
\safemath{\spf}{S_{\CHop}}
\safemath{\bff}{H_{\CHop}}                                      %       function
\safemath{\ircf}{R_{h}}                                         % impulse response correlation fn.
\safemath{\scf}{R_{S}}                                          % scattering function
\safemath{\tfcf}{R_{L}}                                         % time-frequency correlation fn.
\safemath{\bfcf}{R_{H}}                                         % bi-frequency correlation fn.
\safemath{\mi}{I}                                                       % muttal information
\safemath{\capacity}{C}                                         % capacity
\safemath{\uniform}{\mathcal{U}}                        % uniform distribution
\safemath{\normal}{\mathcal{N}}                         % normal distribution
\safemath{\circnorm}{\mathcal{CN}}                      % circ. symm. normal
\safemath{\mchain}{\leftrightarrow}                     % Markov chain
\safemath{\dB}{\,\mathrm{dB}}
\safemath{\dBm}{\,\mathrm{dBm}}
\safemath{\Hz}{\,\mathrm{Hz}}
\safemath{\kHz}{\,\mathrm{kHz}}
\safemath{\MHz}{\,\mathrm{MHz}}
\safemath{\GHz}{\,\mathrm{GHz}}
\safemath{\s}{\,\mathrm{s}}
\safemath{\ms}{\,\mathrm{ms}}
\safemath{\mus}{\,\mathrm{\mu s}}
\safemath{\ns}{\,\mathrm{ns}}
\safemath{\meter}{\,\mathrm{m}}
\safemath{\mm}{\,\mathrm{mm}}
\safemath{\cm}{\,\mathrm{cm}}
\safemath{\m}{\,\mathrm{m}}
\safemath{\W}{\,\mathrm{W}}
\safemath{\J}{\,\mathrm{J}}
\safemath{\K}{\,\mathrm{K}}
\safemath{\bit}{\,\mathrm{bit}}
\safemath{\mW}{\,\mathrm{mW}}
\safemath{\nW}{\,\mathrm{nW}}
\safemath{\pW}{\,\mathrm{pW}}
\safemath{\muW}{\,\mu\mathrm{W}}
\safemath{\Watt}{\,\mathrm{W}}
\safemath{\kbps}{\,\mathrm{kb/s}}
\safemath{\Mbps}{\,\mathrm{Mb/s}}
\safemath{\bpsHz}{\,\mathrm{b/s/Hz}}
\safemath{\define}{\triangleq}                  % definition
\safemath{\equivalent}{\sim}
\safemath{\distas}{\sim}                                        % distributed according to
\safemath{\reals}{\mathbb{R}}
\safemath{\positivereals}{\mathbb{R}^{+}}
\safemath{\integers}{\mathbb{Z}}
\safemath{\posint}{\mathbb{Z}_{+}}
\safemath{\naturals}{\mathbb{N}}
\safemath{\complexset}{\mathbb{C}}
\safemath{\setA}{\mathcal{A}}
\safemath{\setB}{\mathcal{B}}
\safemath{\setC}{\mathcal{C}}
\safemath{\setD}{\mathcal{D}}
\safemath{\setE}{\mathcal{E}}
\safemath{\setF}{\mathcal{F}}
\safemath{\setG}{\mathcal{G}}
\safemath{\setH}{\mathcal{H}}
\safemath{\setI}{\mathcal{I}}
\safemath{\setJ}{\mathcal{J}}
\safemath{\setK}{\mathcal{K}}
\safemath{\setL}{\mathcal{L}}
\safemath{\setM}{\mathcal{M}}
\safemath{\setN}{\mathcal{N}}
\safemath{\setO}{\mathcal{O}}
\safemath{\setP}{\mathcal{P}}
\safemath{\setQ}{\mathcal{Q}}
\safemath{\setR}{\mathcal{R}}
\safemath{\setS}{\mathcal{S}}
\safemath{\setT}{\mathcal{T}}
\safemath{\setU}{\mathcal{U}}
\safemath{\setV}{\mathcal{V}}
\safemath{\setW}{\mathcal{W}}
\safemath{\setX}{\mathcal{X}}
\safemath{\setY}{\mathcal{Y}}
\safemath{\setZ}{\mathcal{Z}}
\safemath{\emptySet}{\varnothing}
\safemath{\bma}{\mathbf{a}}
\safemath{\bmb}{\mathbf{b}}
\safemath{\bmc}{\mathbf{c}}
\safemath{\bmd}{\mathbf{d}}
\safemath{\bme}{\mathbf{e}}
\safemath{\bmf}{\mathbf{f}}
\safemath{\bmg}{\mathbf{g}}
\safemath{\bmh}{\mathbf{h}}
\safemath{\bmi}{\mathbf{i}}
\safemath{\bmj}{\mathbf{j}}
\safemath{\bmk}{\mathbf{k}}
\safemath{\bml}{\mathbf{l}}
\safemath{\bmm}{\mathbf{m}}
\safemath{\bmn}{\mathbf{n}}
\safemath{\bmo}{\mathbf{o}}
\safemath{\bmp}{\mathbf{p}}
\safemath{\bmq}{\mathbf{q}}
\safemath{\bmr}{\mathbf{r}}
\safemath{\bms}{\mathbf{s}}
\safemath{\bmt}{\mathbf{t}}
\safemath{\bmu}{\mathbf{u}}
\safemath{\bmv}{\mathbf{v}}
\safemath{\bmw}{\mathbf{w}}
\safemath{\bmx}{\mathbf{x}}
\safemath{\bmy}{\mathbf{y}}
\safemath{\bmz}{\mathbf{z}}
\bmdefine{\biad}{a}
\bmdefine{\bibd}{b}
\bmdefine{\bicd}{c}
\bmdefine{\bidd}{d}
\bmdefine{\bied}{e}
\bmdefine{\bifd}{f}
\bmdefine{\bigd}{g}
\bmdefine{\bihd}{h}
\bmdefine{\biid}{i}
\bmdefine{\bijd}{j}
\bmdefine{\bikd}{k}
\bmdefine{\bild}{l}
\bmdefine{\bimd}{m}
\bmdefine{\bind}{n}
\bmdefine{\biod}{o}
\bmdefine{\bipd}{p}
\bmdefine{\biqd}{q}
\bmdefine{\bird}{r}
\bmdefine{\bisd}{s}
\bmdefine{\bitd}{t}
\bmdefine{\biud}{u}
\bmdefine{\bivd}{v}
\bmdefine{\biwd}{w}
\bmdefine{\bixd}{x}
\bmdefine{\biyd}{y}
\bmdefine{\bizd}{z}
\bmdefine{\bixid}{\xi}
\bmdefine{\bilambdad}{\lambda}
\bmdefine{\bimud}{\mu}
\bmdefine{\bithetad}{\theta}
\bmdefine{\biphid}{\phi}
\safemath{\bmia}{\biad}
\safemath{\bmib}{\bibd}
\safemath{\bmic}{\bicd}
\safemath{\bmid}{\bidd}
\safemath{\bmie}{\bied}
\safemath{\bmif}{\bifd}
\safemath{\bmig}{\bigd}
\safemath{\bmih}{\bihd}
\safemath{\bmii}{\biid}
\safemath{\bmij}{\bijd}
\safemath{\bmik}{\bikd}
\safemath{\bmil}{\bild}
\safemath{\bmim}{\bimd}
\safemath{\bmin}{\bind}
\safemath{\bmio}{\biod}
\safemath{\bmip}{\bipd}
\safemath{\bmiq}{\biqd}
\safemath{\bmir}{\bird}
\safemath{\bmis}{\bisd}
\safemath{\bmit}{\bitd}
\safemath{\bmiu}{\biud}
\safemath{\bmiv}{\bivd}
\safemath{\bmiw}{\biwd}
\safemath{\bmix}{\bixd}
\safemath{\bmiy}{\biyd}
\safemath{\bmiz}{\bizd}
\safemath{\bmxi}{\bixid}
\safemath{\bmlambda}{\bilambdad}
\safemath{\bmmu}{\bimud}
\safemath{\bmtheta}{\bithetad}
\safemath{\bmphi}{\biphid}
\safemath{\bA}{\mathbf{A}}
\safemath{\bB}{\mathbf{B}}
\safemath{\bC}{\mathbf{C}}
\safemath{\bD}{\mathbf{D}}
\safemath{\bE}{\mathbf{E}}
\safemath{\bF}{\mathbf{F}}
\safemath{\bG}{\mathbf{G}}
\safemath{\bH}{\mathbf{H}}
\safemath{\bI}{\mathbf{I}}
\safemath{\bJ}{\mathbf{J}}
\safemath{\bK}{\mathbf{K}}
\safemath{\bL}{\mathbf{L}}
\safemath{\bM}{\mathbf{M}}
\safemath{\bN}{\mathbf{N}}
\safemath{\bO}{\mathbf{O}}
\safemath{\bP}{\mathbf{P}}
\safemath{\bQ}{\mathbf{Q}}
\safemath{\bR}{\mathbf{R}}
\safemath{\bS}{\mathbf{S}}
\safemath{\bT}{\mathbf{T}}
\safemath{\bU}{\mathbf{U}}
\safemath{\bV}{\mathbf{V}}
\safemath{\bW}{\mathbf{W}}
\safemath{\bX}{\mathbf{X}}
\safemath{\bY}{\mathbf{Y}}
\safemath{\bZ}{\mathbf{Z}}
\bmdefine{\biAd}{A}
\bmdefine{\biBd}{B}
\bmdefine{\biCd}{C}
\bmdefine{\biDd}{D}
\bmdefine{\biEd}{E}
\bmdefine{\biFd}{F}
\bmdefine{\biGd}{G}
\bmdefine{\biHd}{H}
\bmdefine{\biId}{I}
\bmdefine{\biJd}{J}
\bmdefine{\biKd}{K}
\bmdefine{\biLd}{L}
\bmdefine{\biMd}{M}
\bmdefine{\biOd}{N}
\bmdefine{\biPd}{O}
\bmdefine{\biQd}{P}
\bmdefine{\biRd}{R}
\bmdefine{\biSd}{S}
\bmdefine{\biTd}{T}
\bmdefine{\biUd}{U}
\bmdefine{\biVd}{V}
\bmdefine{\biWd}{W}
\bmdefine{\biXd}{X}
\bmdefine{\biYd}{Y}
\bmdefine{\biZd}{Z}
\bmdefine{\biDelta}{\Delta}
\bmdefine{\biLambda}{\Lambda}
\bmdefine{\biPhi}{\Phi}
\bmdefine{\biSigma}{\Sigma}
\bmdefine{\biOmega}{\Omega}
\bmdefine{\biTheta}{\Theta}
\safemath{\bimA}{\biAd}
\safemath{\bimB}{\biBd}
\safemath{\bimC}{\biCd}
\safemath{\bimD}{\biDd}
\safemath{\bimE}{\biEd}
\safemath{\bimF}{\biFd}
\safemath{\bimG}{\biGd}
\safemath{\bimH}{\biHd}
\safemath{\bimI}{\biId}
\safemath{\bimJ}{\biJd}
\safemath{\bimK}{\biKd}
\safemath{\bimL}{\biLd}
\safemath{\bimM}{\biMd}
\safemath{\bimN}{\biNd}
\safemath{\bimO}{\biOd}
\safemath{\bimP}{\biPd}
\safemath{\bimQ}{\biQd}
\safemath{\bimR}{\biRd}
\safemath{\bimS}{\biSd}
\safemath{\bimT}{\biTd}
\safemath{\bimU}{\biUd}
\safemath{\bimV}{\biVd}
\safemath{\bimW}{\biWd}
\safemath{\bimX}{\biXd}
\safemath{\bimY}{\biYd}
\safemath{\bimZ}{\biZd}
\safemath{\bDelta}{\bielta}
\safemath{\bLambda}{\biLambda}
\safemath{\bPhi}{\biPhi}
\safemath{\bSigma}{\biSigma}
\safemath{\bOmega}{\biOmega}
\safemath{\bTheta}{\biTheta}
\safemath{\veca}{\bma}
\safemath{\vecb}{\bmb}
\safemath{\vecc}{\bmc}
\safemath{\vecd}{\bmd}
\safemath{\vece}{\bme}
\safemath{\vecf}{\bmf}
\safemath{\vecg}{\bmg}
\safemath{\vech}{\bmh}
\safemath{\veci}{\bmi}
\safemath{\vecj}{\bmj}
\safemath{\veck}{\bmk}
\safemath{\vecl}{\bml}
\safemath{\vecm}{\bmm}
\safemath{\vecn}{\bmn}
\safemath{\veco}{\bmo}
\safemath{\vecp}{\bmp}
\safemath{\vecq}{\bmq}
\safemath{\vecr}{\bmr}
\safemath{\vecs}{\bms}
\safemath{\vect}{\bmt}
\safemath{\vecu}{\bmu}
\safemath{\vecv}{\bmv}
\safemath{\vecw}{\bmw}
\safemath{\vecx}{\bmx}
\safemath{\vecy}{\bmy}
\safemath{\vecz}{\bmz}
\safemath{\vecZero}{\bZero}
\safemath{\vecxi}{\bmxi}
\safemath{\veclambda}{\bmlambda}
\safemath{\vecmu}{\bmmu}
\safemath{\vectheta}{\bmtheta}
\safemath{\vecphi}{\bmphi}
\safemath{\matA}{\bA}
\safemath{\matB}{\bB}
\safemath{\matC}{\bC}
\safemath{\matD}{\bD}
\safemath{\matE}{\bE}
\safemath{\matF}{\bF}
\safemath{\matG}{\bG}
\safemath{\matH}{\bH}
\safemath{\matI}{\bI}
\safemath{\matJ}{\bJ}
\safemath{\matK}{\bK}
\safemath{\matL}{\bL}
\safemath{\matM}{\bM}
\safemath{\matN}{\bN}
\safemath{\matO}{\bO}
\safemath{\matP}{\bP}
\safemath{\matQ}{\bQ}
\safemath{\matR}{\bR}
\safemath{\matS}{\bS}
\safemath{\matT}{\bT}
\safemath{\matU}{\bU}
\safemath{\matV}{\bV}
\safemath{\matW}{\bW}
\safemath{\matX}{\bX}
\safemath{\matY}{\bY}
\safemath{\matZ}{\bZ}
\safemath{\matZero}{\bZero}
\safemath{\matDelta}{\bDelta}
\safemath{\matLambda}{\bLambda}
\safemath{\matPhi}{\bPhi}
\safemath{\matSigma}{\bSigma}
\safemath{\matOmega}{\bOmega}
\safemath{\matTheta}{\bTheta}
\safemath{\matIdentity}{\matI}
\newcommand{\sectionname}{Sec.}
\renewcommand{\figurename}{Fig.}
\newcommand{\theoremname}{Theorem}
\newcommand{\equationname}{Eq.}
\newcommand{\equationsname}{Eqs.}
\newcommand{\propname}{Prop.}
\newcommand{\covs}{\mathbf{\bSigma}}
\safemath{\sourceindex}{s}
\safemath{\othersource}{s'}
\safemath{\sourceset}{\setS}
\safemath{\sourcesubset}{\setL}
\safemath{\sourcenumber}{S}
\safemath{\destindex}{d}
\safemath{\relayindex}{r}
\safemath{\otherrelay}{r'}
\safemath{\relayset}{\setR}
\safemath{\relaysubset}{\setT}
\safemath{\relaynumber}{R}
\safemath{\setname}{\setA}
\safemath{\timeindex}{t}
\safemath{\powerSymbol}{p}
\safemath{\upperboundSymbol}{\mathcal{C}}
\safemath{\capacitySymbol}{C}
\safemath{\SNRSymbol}{\gamma}
\safemath{\pathgainSymbol}{h}
\newcommand{\pathgain[2]}{\pathgainSymbol_{{#1}{#2}}}
\safemath{\noiseSymbol}{Z}
\newcommand{\noise[1]}{\noiseSymbol_{#1}}
\safemath{\noisePower}{\sigma^2_w}
\safemath{\entropy}{\mathrm{H}}
\safemath{\diffentropy}{h}
\newcommand{\information}{\mathrm{I}}
\safemath{\corrsrSymbol}{\rho}
\safemath{\corrrrSymbol}{\delta}
\safemath{\corrssSymbol}{\lambda}
\safemath{\varSpaceSymbol}{\mathcal{X}}
\safemath{\auxiliaryvarSpace}{\mathcal{Z}}
\safemath{\othervarSpaceSymbol}{\mathcal{Y}}
\safemath{\estimationvarSpaceSymbol}{\widehat{\mathcal{Y}}}
\safemath{\varSymbol}{X}
\safemath{\othervarSymbol}{Y}
\safemath{\auxvarSymbol}{\widehat{Y}}
\safemath{\auxiliaryvar}{Z}
\safemath{\estimationvarSymbol}{\widehat{\othervarSymbol}}
\newcommand{\var[1]}{\varSymbol_{#1}}
\newcommand{\othervar[1]}{\othervarSymbol_{#1}}
\safemath{\auxvarsampleSymbol}{z}
\safemath{\varsetSymbol}{\boldsymbol{X}}
\safemath{\othervarsetSymbol}{\boldsymbol{Y}}
\safemath{\estimationvarsetSymbol}{\widehat{\othervarsetSymbol}}
\safemath{\varsampleSymbol}{x}
\safemath{\othervarsampleSymbol}{y}
\safemath{\auxiliaryvarsample}{z}
\safemath{\estimationvarsampleSymbol}{\hat{\othervarsampleSymbol}}
\safemath{\varsetsampleSymbol}{\boldsymbol{x}}
\safemath{\othervarsetsampleSymbol}{\boldsymbol{y}}
\safemath{\estimationvarsetsampleSymbol}{\hat{\othervarsampleSymbol}}
\safemath{\varSequence}{X}
\safemath{\estimationvarSequence}{\widehat{X}}
\safemath{\othervarSequence}{Y}
\safemath{\auxiliaryvarSequence}{Z}
\safemath{\distortion}{d}
\safemath{\distortionLimit}{D}
\safemath{\rateSymbol}{R}
\safemath{\estimationrateSymbol}{\widehat{R}}
\safemath{\assistrateSymbol}{\tilde{R}}
\safemath{\blockindex}{b}
\safemath{\blocknumber}{B}
\safemath{\signalsetSymbol}{\mathcal{W}}
\safemath{\signalnumber}{n}
\safemath{\signalSymbol}{w}
\safemath{\signalsSymbol}{\boldsymbol{w}}
\safemath{\estimationparamSymbol}{k}
\safemath{\assistparamSymbol}{z}
\safemath{\gaussian}{\mathcal{N}}
\newcommand{\etal}{\emph{et al.\ }}
\begin{document}

\title{\LARGE{Upper{-}Bounding the Capacity of Relay Communications{ - }Part II}}

\author{
  \IEEEauthorblockN{Farshad Shams}
  \IEEEauthorblockA{Dep. of Computer Science and Engineering\\
    IMT Institute for Advanced Studies Lucca, Italy\\
    Email: f.shams@imtlucca.it}
  \and
  \IEEEauthorblockN{Marco Luise}
  \IEEEauthorblockA{Dipartimento di Ingegneria dell'Informazione\\
    University of Pisa, Italy\\
    Email: marco.luise@iet.unipi.it}
}
\maketitle

\begin{abstract}
This paper focuses on the capacity of peer{-}to{-}peer relay communications wherein the transmitter are assisted by an arbitrary number of parallel relays, i.e. there is no link and cooperation between the relays themselves.
We detail the mathematical model of different relaying strategies including cutset and amplify and forward strategies.
The cutset upper bound capacity is presented as a reference to compare another realistic strategy.
We present its outer region capacity which is lower than that in the existing literature. We show that a multiple parallel relayed network achieves its maximum capacity by virtue of only one relay or by virtue of all relays together. Adding a relay may even decrease the overall capacity or may do not change it.
We exemplify various outer region capacities of the addressed strategies with two different case studies. The results exhibit that in low signal{-}to{-}noise ratio (SNR) environments the cutset outperforms the amplify and forward strategy and this is contrary in high SNR environments.
\end{abstract}

\section{Introduction}

Wireless channels suffer from many factors, e.g. shadowing, fading, co{-}channel interference and adjacent interference that decrease the performance of the communication in terms of capacity. Cooperative communications are recently proposed as techniques to increase the system capacity and diversity gain in wireless networks.
Multi{-}hop relaying networks allow wireless terminals to aid a transmission of information between a transmitter and a corresponding destination. Such a relaying can be applied in ad{-}hoc networks to realize a transmission between two far wireless nodes in order to increase the coverage and the throughput in terms of energy consumption and capacity \cite{zahedi-thesis}. The deployment of relaying communications realize a cost effective communications by minimizing the requirements for fixed infrastructures.

The most well{-}known relaying schemes are \emph{amplify and forward} (AF), \emph{decode and forward} (DF) and \emph{compress and forward} (CF) \cite{GastparKramerGupta02}. In AF strategy, the relay nodes do not regenerate a new code word, or equivalently there is no decoder or encoder at the relays.
On the other hand, in DF and CF schemes, the relay nodes regenerate a new code word and there are encoder and decoder at each relay node.
The AF scheme is less complex than that DF and CF and for this much contributions are focused on that. The max{-}flow min{-}cut (cutset) theorem introduces the most general coding strategy and its upper bound capacity is larger than that of DF and CF strategies \cite{GastparKramerGupta02}.
Although the performance of relaying wireless communications with AWGN channels is extensively studied in terms of error probability and power allocation, but there are few remarkable contributions around information theoretic aspects of point{-}to{-}point multiple relayed communications. 

Aref in his Ph.D. dissertation \cite{aref-thesis} formulated the cutset upper bound for the case of point{-}to{-}point connection through multiple relays.
%%%He also established the capacity of a degraded\footnotemark \footnotetext{In relay networks composed of degraded relay channels, the received signals at relay nodes $\left(\othervar[1],\othervar[2],\dots,\othervar[\relaynumber]\right)$ are conditionally distributed as a Markov chain: $\othervar[2]\rightarrow\othervar[3]\rightarrow\cdots\rightarrow\othervar[\relaynumber]\rightarrow\othervar[1]$.}
%%%relay network and in particular a cascade of degraded relay channels. Then, El Gamal and Aref in \cite{GamalAref82} established the capacity of the semi{-}deterministic relay channels where the received signal at the relay is a deterministic function of the emitted signals at the source and relay nodes.
Zhang in \cite{Zhang88} established the capacity of the relay channel when the channel from the relay to
the destination is a noiseless channel of fixed capacity allowing the relay sending additional information to the destination.
Gastpar \etal in \cite{GastparKramerGupta02} introduce information theoretic aspects of a point{-}to{-}point connection using multiple parallel relays wherein each transceiver has multiple antennas.

In this paper, we treat the upper bound capacity of a peer{-}to{-}peer communication with multiple ``parallel" relays wherein there is no cooperation among relays. We investigate the upper bound capacity of point{-}to{-}point cooperative communication assisted by more than one relay in particular by a bank of \relaynumber ``parallel" relays. As the number of relays increases, more radio resources and more degrees of freedom can be jointly utilized to assist the source transmission.
%%However, to exploit these advantages, one must overcome the challenges posed by the individual power constraints.
We detail the mathematical model of different relaying schemes and introduce a capacity upper bound of each of them. We will show that in such a reliable network, the upper bound capacity is achieved by either only one relay or by all relays together.
We present the information{-}theoretic perspective, and outer range capacity of different relayed communication scenarios, using the cutset theorem , and also applying known AF relaying protocol. The main focus of this paper is to derive the highest achievable data rate of the transmitter, applying different strategies at the relay nodes.
We will show that increasing the number of relays does not necessarily raise the upper bound of the capacity region, and may even decrease data rate and harmfully increase the network cost in terms of power expenditure.

%%%The main contribution of this work is as follows:
%%%\begin{itemize}
%%%\item the outer range capacity in a multiple parallel relayed point{-}to{-}point communication applying different relay strategies is calculated. Our upper bound capacity is lower than that in the existing literature;
%%%\item different relaying strategies in addressed networks are deeply discussed from information theory point of view;
%%%\item a multiple parallel relayed communication achieves its maximum capacity by either using only one relay or using all relays together. In the first case, all other relays can be turned off;
%%%\item increasing the upper bound capacity strictly depends upon both relays strategies and positions;
%%%\item adding a new relay can significantly increase upper bound capacity conditional upon both the relay's position and strategy;
%%%\item depends upon relaying strategies and positions, adding a new relay or transmitter may even decease or may do not change the overall data rate;
%%%%%\item we show that in networks consist of multiple parallel relays, using the same relaying strategy at all relays achieves higher capacity rather than using different relaying schemes;
%%%\item we describe a new information{-}theoretic perspective for CF technique based on distributed Wyner{-}Ziv source coding;
%%%\item we explore the capacity of different relaying techniques in different network scenarios with two case studies.\\
%%%\end{itemize}

The rest of this paper is structured as follows. First, in \sectionname~\ref{sec:model1M1} we describe the channel model of a multiple parallel relayed point{-}to{-}point communication. The cutset theorem is used in \sectionname~\ref{sec:cutset1M1}.  %% to introduce a point of reference for applicable strategies.
The following section is devoted to AF strategy. We exemplify the results in \sectionname~\ref{sec:casestudy1M1}. Finally, we conclude in \sectionname~\ref{sec:conclusion1M1}.

\begin{notation}\label{nota:scaledpathgain}
We use the same notation as that in \cite{Shams13}.
\end{notation}

%%In \sectionname~\ref{sec:AFTech1M1} we apply amplify and forward (AF) technique. Then, in \sectionname~\ref{sec:DFTech1M1} we study the decode and forward (DF) technique. We apply compress and forward (CF) technique in \sectionname~\ref{sec:CFTech1M1}, and we deeply studied it as a distributed Wyner{-}Ziv source coding theorem. In \sectionname~\ref{sec:DCFTech1M1} we suppose some relays act DF strategy and the others relays apply CF technique. Finally, we conclude in \sectionname~\ref{sec:conclusion1M1}.

\section{System model}\label{sec:model1M1}
We study a cooperative network that consists of one source $\sourceindex$, one destination $\destindex$, and an arbitrary number of parallel relays belonging to set $\relayset=[1,\dots,\relayindex,\dots,\relaynumber]$. We assume there is no link between relays and the communication is full{-}duplex mode, so that relay assisted transmissions must be conducted over one phase and all channels are always busy. The cooperative network of \figurename~\ref{fig:onesmultiplerencdec} consists of $2\relaynumber+2$ alphabet spaces: $\varSpace[\sourceindex]$ at the transmitter's encoder, $\varSpace[\relayindex]$ and $\othervarSpace[\relayindex]$ at each relay $\relayindex$, and finally $\othervarSpace[\destindex]$ at the destination's decoder.

\begin{figure}
  \begin{center}
    \psfrag{s}[r][b][0.85]{$\sourceindex$}
    \psfrag{r1}[c][b][0.8]{$1$}
    \psfrag{r2}[c][b][0.8]{$2$}
    \psfrag{r}[c][b][0.8]{$\relayindex$}
    \psfrag{rR}[c][b][0.8]{$\relaynumber$}
    \psfrag{d}[l][b][0.85]{$\destindex$}
    \psfrag{Enc}[c][b][0.7]{Encoder}
    \psfrag{Dec}[c][b][0.7]{Decoder}
    \psfrag{Z1}[c][b][0.75][48]{$\noise[1]$}
    \psfrag{Z2}[c][c][0.75][48]{$\noise[2]$}
    \psfrag{Zr}[c][b][0.75][-20]{$\noise[\relayindex]$}
    \psfrag{ZR}[c][b][0.75][-36]{$\noise[\relaynumber]$}
    \psfrag{Zd}[c][b][0.75]{$\noise[\destindex]$}
    \psfrag{hs1}[c][c][0.75][45]{$\sqrt{\pathgain[\sourceindex]{1}}$}
    \psfrag{hs2}[c][c][0.75][30]{$\sqrt{\pathgain[\sourceindex]{2}}$}
    \psfrag{hsr}[c][c][0.75][-14]{$\sqrt{\pathgain[\sourceindex]{\relayindex}}$}
    \psfrag{hsR}[c][c][0.75][-38]{$\sqrt{\pathgain[\sourceindex]{\relaynumber}}$}
    \psfrag{h1d}[c][c][0.75][-35]{$\sqrt{\pathgain[1]{\destindex}}$}
    \psfrag{h2d}[c][c][0.75][-17]{$\sqrt{\pathgain[2]{\destindex}}$}
    \psfrag{hrd}[c][c][0.75][25]{$\sqrt{\pathgain[\relayindex]{\destindex}}$}
    \psfrag{hRd}[c][c][0.75][41]{$\sqrt{\pathgain[\relaynumber]{\destindex}}$}
    \psfrag{hsd}[c][b][0.75]{$\sqrt{\pathgain[\sourceindex]{\destindex}}$}
    \psfrag{xs}[c][c][0.8]{$\var[\sourceindex]$}
    \psfrag{y1}[c][b][0.8]{$\othervar[1]$}
    \psfrag{x1}[c][b][0.8]{$\var[1]$}
    \psfrag{y2}[c][b][0.8]{$\othervar[2]$}
    \psfrag{x2}[c][b][0.8]{$\var[2]$}
    \psfrag{yr}[c][c][0.8]{$\othervar[\relayindex]$}
    \psfrag{xr}[c][c][0.8]{$\var[\relayindex]$}
    \psfrag{yR}[c][c][0.8]{$\othervar[\relaynumber]$}
    \psfrag{xR}[c][c][0.8]{$\var[\relaynumber]$}
    \psfrag{yd}[c][b][0.8]{$\othervar[\destindex]$}
    \psfrag{ws}[c][b][0.8]{$\signal[\sourceindex]{}$}
    \psfrag{wd}[c][b][0.8]{$\destsignal[\sourceindex]{}$}
    \includegraphics[width=0.95\columnwidth]{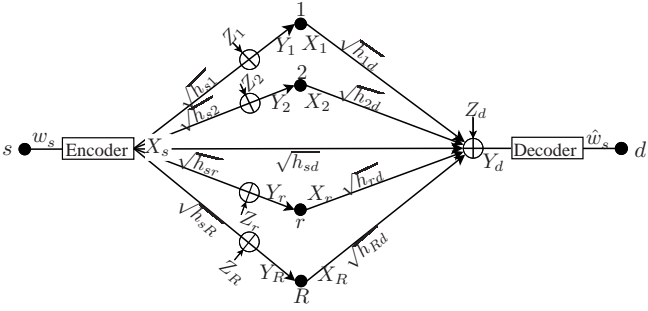}
  \end{center}
  \caption{One source, multiple parallel relays, one destination cooperative communication network.}
  \label{fig:onesmultiplerencdec}
\end{figure}

The sent message $\signal[\sourceindex]{}$ is a sequence of sub{-}messages $\signal[\sourceindex]{1},\dots,\signal[\sourceindex]{\blockindex},\dots,\signal[\sourceindex]{\blocknumber}$ uniformly drawn from a message set with size $\signalnumber$ and rate $\rate[\sourceindex]{}$ represented by
${\signalset[\sourceindex]=\{0, 1, ..., 2^{\signalnumber\rate[\sourceindex]{}}-1\}}$.
%%The encoder's process at the transmitter is a function $\signalset[\sourceindex]\to\varSpace[\sourceindex]$.
First, the transmitter encodes the message $\signal[\sourceindex]{}$ in a symbol sequence $\var[\sourceindex][1]\left(\signal[\sourceindex]{1}\right),\dots, \var[\sourceindex][\blocknumber]\left(\signal[\sourceindex]{\blocknumber}\right)$ under the constraint that $\expectation[{\var[\sourceindex]^2}]\le\powermax[\sourceindex]{}$, with $\expectation[{\var[\sourceindex]}]=0$. Each $\var[\sourceindex][\blockindex]\left(\signal[\sourceindex]{\blockindex}\right)$ represents the $\blockindex${th} random Gaussian codeword in the alphabet space $\varSpace[\sourceindex]$. In each block with index $\blockindex$, the $\var[\sourceindex][\blockindex]\left(\signal[\sourceindex]{\blockindex}\right)$ is broadcast to the destination and to all relays.
A noisy version of the transmitted signal approaches the relay $\relayindex$ as:
\begin{align}\label{eq:sRbroadcast}
\othervar[\relayindex][\blockindex] = \sqrt{\pathgain[\sourceindex]{\relayindex}}\,\var[\sourceindex][\blockindex]\left(\signal[\sourceindex]{\blockindex}\right) + \noise[\relayindex]
\end{align}
where $\noise[\relayindex]\sim\gaussian\left(0, \noisePower\right)$ is additive white Gaussian noise at the proper relay.
At the same time, each relay $\relayindex$ executes a function of $\othervarSpace[\relayindex]\to\varSpace[\relayindex]$. That is, every relay $\relayindex$ processes the received signal in the previous block index, generates a new signal
and then forward the new symbol $\var[\relayindex][\blockindex]\left(\signal[\sourceindex]{\blockindex-1}\right)$ to destination.  The information $\var[\relayindex][\blockindex]\left(\signal[\sourceindex]{\blockindex-1}\right)$ is the output of the relay $\relayindex$'s deterministic function that depends on the specific cooperative strategy whose inputs are the previous received signals:
\begin{align}\label{eq:deterministinfunc}
\var[\relayindex][\blockindex]\left(\signal[\sourceindex]{\blockindex-1}\right) = \relayfunction[\relayindex]{\blockindex}(\othervar[\relayindex][\blockindex - 1], \othervar[\relayindex][\blockindex - 2], ...,\othervar[\relayindex][1]);
\end{align}

Each relay $\relayindex$ has its own individual power constraint as $\expectation[{\var[\relayindex]^2}]\le\powermax[\relayindex]{}$. The symbol $\var[\relayindex][\blockindex]\left(\signal[\sourceindex]{\blockindex-1}\right)$ is forwarded to the destination simultaneously with broadcasting a new symbol by the source. The receive signal at the destination is given by:
\begin{align}\label{eq:Rdestreceive}
\displaystyle\othervar[\destindex][\blockindex]\!=\! \sqrt{\pathgain[\sourceindex]{\destindex}}\var[\sourceindex][\blockindex]\left(\signal[\sourceindex]{\blockindex}\right)\!+\! \sum_{\relayindex\in\relayset}{\sqrt{\pathgain[\relayindex]{\destindex}}\,\var[\relayindex][\blockindex]\left(\signal[\sourceindex]{\blockindex-1}\right)}\!+\! \noise[\destindex]
\end{align}
where $\noise[\destindex]\sim\gaussian\left(0, \noisePower\right)$. The decoding function at the destination is a Gaussian de{-}mapping function of $\othervarSpace[\destindex]\rightarrow\displaystyle\signalset[\sourceindex]$ with error probability:
%The probability of error at destination's decoder is formulated as:
\begin{align}
P_{e}^{\signalnumber}=\displaystyle{2^{-\signalnumber\rate[\sourceindex]{}}}\cdot\sum_{\signal[\sourceindex]{}}{\text{Pr }\{\destsignal[\sourceindex]{}\neq\signal[\sourceindex]{}|\signal[\sourceindex]{}\,\text{was sent}\}}
\end{align}
based on the assumption that the messages are independent, and uniformly distributed over the alphabet space $\signalset[\sourceindex]$. The minimum rate $\rate[\sourceindex]{}$ is achievable if there exists a sequence of code $\left(\signalnumber,2^{\signalnumber\rate[\sourceindex]{}}\right)$ for which $P_e^{\signalnumber\to\infty}$ is arbitrarily close to zero.

To go on with our analysis, we define the parameter $\corrsr[\sourceindex]{\relayindex}$ as the correlation between the output sequences of $\var[\sourceindex]$ and $\var[\relayindex]$ where $\relayindex\in\relayset$, i.e.  $\corrsr[\sourceindex]{\relayindex}=\displaystyle\frac{\expectation[{\var[\sourceindex]\var[\relayindex]}]}{\sqrt{\expectation[{\var[\sourceindex]^{2}}]\expectation[{\var[\relayindex]^{2}}]}}$ , and using a similar formulation, we represent the parameter
$\corrrr[\relayindex]{\otherrelay}$ as the correlation between the outputs of two relay nodes $\relayindex, \otherrelay\in\relayset$, i.e. the statistical correlation between $\var[\relayindex]$ and $\var[\otherrelay]$.

\section{Cutset upper bound}\label{sec:cutset1M1}

To compute the achievable cutset capacity of \figurename~\ref{fig:onesmultiplerencdec} cooperative network, we resort on the general representation of Aref's tool \cite[Th. 3.4]{aref-thesis}.

\begin{proposition}\label{pro:Arefcutset}
\cite[Th. 3.4]{aref-thesis} A network consisting of one transmitter, multiple relays, and one destination satisfies:
\begin{equation}\label{eq:Arefcutset1M1}
\displaystyle\upperbound[cutset]{(\sourceindex;\relayset;\destindex)} = \underset{p(\varsample[\sourceindex], \varsetsample[\relayset])}{\sup}\; \min_{\relaysubset\subseteq\relayset} \;\left\{\information\left(\var[\sourceindex]\,\varset[\relaysubset]\,;\,\othervar[\destindex]\,\othervarset[\relaysubset^C]|\varset[\relaysubset^C]\right)\right\}
\end{equation}
where maximization is subject to the power constraints defined by the network, and the channel condition $p(\othervarsample[\destindex]\,\othervarsetsample[\relayset]\,|\,\varsample[\sourceindex]\,\varsetsample[\relayset])$.\\ $\text{ }$\hfill$\blacksquare$
\end{proposition}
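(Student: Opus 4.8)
\noindent\emph{Proof plan.}
The statement is a converse (impossibility) bound, so the plan is to show that no rate exceeding the right-hand side of \eqref{eq:Arefcutset1M1} is achievable on the network of \figurename~\ref{fig:onesmultiplerencdec}; this is the classical max-flow min-cut converse of network information theory, specialised to one source and one destination. Fix a reliable sequence of $\left(\signalnumber,2^{\signalnumber\rate[\sourceindex]{}}\right)$ codes, let $\othervar[\destindex]^{\signalnumber}$ collect all symbols received at $\destindex$, and let $\othervar[\relayindex]^{\signalnumber}$ and $\var[\relayindex]^{\signalnumber}$ denote the received and the transmitted sequences at relay $\relayindex$. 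The first step is Fano's inequality at the decoder: since $\destsignal[\sourceindex]{}$ is a deterministic function of $\othervar[\destindex]^{\signalnumber}$ that equals $\signal[\sourceindex]{}$ with error probability $P_{e}^{\signalnumber}\to 0$, one obtains $\signalnumber\rate[\sourceindex]{}=\entropy(\signal[\sourceindex]{})\le\information(\signal[\sourceindex]{};\othervar[\destindex]^{\signalnumber})+\signalnumber\varepsilon_{\signalnumber}$ with $\varepsilon_{\signalnumber}\to 0$. It thus suffices to upper bound $\tfrac{1}{\signalnumber}\information(\signal[\sourceindex]{};\othervar[\destindex]^{\signalnumber})$, uniformly in the chosen cut, by the single-letter quantity inside the $\min$ in \eqref{eq:Arefcutset1M1}.

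Next, fix an arbitrary cut $\relaysubset\subseteq\relayset$ and put $\relaysubset^{C}=\relayset\setminus\relaysubset$; the cut splits the nodes into a source side $\{\sourceindex\}\cup\relaysubset$ and a destination side $\{\destindex\}\cup\relaysubset^{C}$. I would bound $\information(\signal[\sourceindex]{};\othervar[\destindex]^{\signalnumber})$ by the information crossing this cut in three moves. First, enlarge the destination-side observation: $\information(\signal[\sourceindex]{};\othervar[\destindex]^{\signalnumber})\le\information(\signal[\sourceindex]{};\othervar[\destindex]^{\signalnumber}\,\othervarset[\relaysubset^{C}]^{\signalnumber})$. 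Second, condition on the destination-side channel inputs $\varset[\relaysubset^{C}]^{\signalnumber}$ and apply the data-processing inequality: given $\varset[\relaysubset^{C}]^{\signalnumber}$, the pair $(\othervar[\destindex]^{\signalnumber},\othervarset[\relaysubset^{C}]^{\signalnumber})$ is a noisy function of the source-side inputs $(\var[\sourceindex]^{\signalnumber},\varset[\relaysubset]^{\signalnumber})$ alone --- indeed, by \eqref{eq:sRbroadcast}--\eqref{eq:Rdestreceive} each destination-side relay hears only the source while the destination hears the source and all relays --- so $\signal[\sourceindex]{}\to(\var[\sourceindex]^{\signalnumber}\,\varset[\relaysubset]^{\signalnumber})\to(\othervar[\destindex]^{\signalnumber}\,\othervarset[\relaysubset^{C}]^{\signalnumber})$ is a Markov chain conditioned on $\varset[\relaysubset^{C}]^{\signalnumber}$, which gives $\information(\signal[\sourceindex]{};\othervar[\destindex]^{\signalnumber})\le\information(\var[\sourceindex]^{\signalnumber}\,\varset[\relaysubset]^{\signalnumber};\othervar[\destindex]^{\signalnumber}\,\othervarset[\relaysubset^{C}]^{\signalnumber}\mid\varset[\relaysubset^{C}]^{\signalnumber})$. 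Third, use the memorylessness of the per-use channel: chain-ruling this $\signalnumber$-letter conditional mutual information over channel uses and dropping the non-negative conditioning on past symbols yields $\information(\signal[\sourceindex]{};\othervar[\destindex]^{\signalnumber})\le\sum_{i=1}^{\signalnumber}\information(\var[\sourceindex][i]\,\varset[\relaysubset][i];\othervar[\destindex][i]\,\othervarset[\relaysubset^{C}][i]\mid\varset[\relaysubset^{C}][i])$.

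The third step is the single-letterisation. Introduce a time-sharing index $\iota$ uniform on $\{1,\dots,\signalnumber\}$ and independent of everything else, and set $\var[\sourceindex]\dfn\var[\sourceindex][\iota]$, $\varset[\relaysubset]\dfn\varset[\relaysubset][\iota]$, $\othervar[\destindex]\dfn\othervar[\destindex][\iota]$, and so on. Then $\information(\var[\sourceindex]\,\varset[\relaysubset];\othervar[\destindex]\,\othervarset[\relaysubset^{C}]\mid\varset[\relaysubset^{C}],\iota)$ equals $\tfrac{1}{\signalnumber}$ times the sum above, and, since the channel acts identically at every use, dropping the conditioning on $\iota$ only increases it, so it is at most $\information(\var[\sourceindex]\,\varset[\relaysubset];\othervar[\destindex]\,\othervarset[\relaysubset^{C}]\mid\varset[\relaysubset^{C}])$ evaluated under the mixture law $p(\varsample[\sourceindex],\varsetsample[\relayset])$ of $(\var[\sourceindex],\varset[\relayset])$ obtained by averaging the per-use laws over $\iota$. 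That mixture still satisfies the average power budgets $\expectation[{\var[\sourceindex]^{2}}]\le\powermax[\sourceindex]{}$ and $\expectation[{\var[\relayindex]^{2}}]\le\powermax[\relayindex]{}$ by linearity of expectation, and it is consistent with the prescribed channel law $p(\othervarsample[\destindex]\,\othervarsetsample[\relayset]\mid\varsample[\sourceindex]\,\varsetsample[\relayset])$. Feeding this back into Fano and letting $\signalnumber\to\infty$ gives $\rate[\sourceindex]{}\le\information(\var[\sourceindex]\,\varset[\relaysubset];\othervar[\destindex]\,\othervarset[\relaysubset^{C}]\mid\varset[\relaysubset^{C}])$ for this particular cut and this admissible input law. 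Since $\relaysubset$ was arbitrary, $\rate[\sourceindex]{}$ is at most the minimum of these quantities over $\relaysubset\subseteq\relayset$ evaluated at the code's law, which is in turn at most the supremum of that minimum over all admissible laws --- that is, exactly the right-hand side of \eqref{eq:Arefcutset1M1}.

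The step I expect to be the main obstacle is the second one: justifying the data-processing and enlargement argument with the \emph{correct} conditioning on $\varset[\relaysubset^{C}]^{\signalnumber}$. One must exploit the strictly causal, block-Markov structure of the relay maps \eqref{eq:deterministinfunc} --- the block-$\blockindex$ output of any relay uses only received symbols from blocks $1,\dots,\blockindex-1$, and only from \emph{that} relay --- to verify that at every channel use the destination-side variables are conditionally independent of $\signal[\sourceindex]{}$ given the source-side inputs, so that no ``future'' randomness leaks through the conditioning and the chain-rule expansion genuinely telescopes into the single-letter sum. Once this bookkeeping is settled, the remaining ingredients --- Fano, the memoryless single-letter bound, the time-sharing convexification (which uses the convexity of the power-constrained set of input laws), and the invariance of the average power budgets under mixing --- are routine; they are precisely the steps carried out in \cite[Th.~3.4]{aref-thesis}, on which Proposition~\ref{pro:Arefcutset} is based.
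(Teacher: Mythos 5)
The paper offers no proof of this proposition at all---it is imported verbatim from \cite[Th.~3.4]{aref-thesis} and closed with a box---so the only meaningful comparison is with the cited source, and your reconstruction (Fano's inequality at the decoder, per-cut data processing, memoryless single-letterisation via a time-sharing variable, then minimising over cuts and taking the supremum over admissible input laws) is precisely the standard max-flow min-cut converse that Aref's proof and the textbook treatments use. You also correctly identify the one genuinely delicate point: the Markov/data-processing step across the cut is not valid as a block-level statement (conditioning on the whole sequence $\varset[\relaysubset^{C}]^{\signalnumber}$ leaks information about past noise through the feedback in \eqref{eq:deterministinfunc}) and must instead be carried out per channel use, exploiting the strictly causal relay maps, exactly as you note in your closing paragraph.
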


The aim of \propname~\ref{pro:Arefcutset} is to find the minimum data flow between the source and receiver via one of the $\sum_{\relayindex=0}^{\relaynumber}\binom{\relaynumber}{\relayindex}$ possible relay selections. There are $2^{\relaynumber}$ possibilities to select subset $\relaysubset\subset\relayset$, known as \emph{network cuts}.
In \equationname~\ref{eq:Arefcutset1M1}, suppose the mutual information term achieves its minimum with an $\setname\subseteq\relayset$, i.e.  $\upperbound[cutset]{(\sourceindex;\relayset;\destindex)}=\sup_{p(\varsample[\sourceindex], \varsetsample[\relayset])} \information\left(\var[\sourceindex]\varset[\setname]\,;\,\othervar[\destindex]\othervarset[\setname^C]|\varset[\setname^C]\right)$.
This means that the members of $\setname$ are in a sufficiently good conditions to (perfectly) receive signals from the transmitter rather than to deliver signals to the destination. On the other hand, the relays in $\setname^C$ are in a good positions to (error freely) convey signals to the destination.

The relation stated in \propname~\ref{pro:Arefcutset} was proved in a general case where the relays are physically connected to each other.
For the cooperative network scenario of \figurename~\ref{fig:onesmultiplerencdec}, where there is no cooperation among the relays, references
 \cite{delcoso-thesis, gastpar-thesis, Kramer05} represent remarkable contributions for which concern the maximum achievable capacity. In \cite[Sec. 5.4]{gastpar-thesis} and \cite[Sec. 3.2.2]{delcoso-thesis} only two broadcast $\left(\relaysubset=\emptyset\right)$ and multiple-access $\left(\relaysubset=\relayset\right)$ cuts have been considered and the power constraint was not identified for each node.
In the present work, for the scenario of \figurename~\ref{fig:onesmultiplerencdec}, first we demonstrate that $\upperbound[cutset]{(\sourceindex;\relayset;\destindex)}$ takes the maximum when the Gaussian variables $\varset[\relayset]$ are fully correlated, i.e. $\corrrr[\relayindex]{\otherrelay}=1\;\;\forall\relayindex, \otherrelay\in\relayset$. Then, we approach the upper bound capacity considering all $2^{\relaynumber}$ network cuts.

\begin{theorem}\label{th:corrrr1}
In the network of \figurename~\ref{fig:onesmultiplerencdec}, the $\upperbound[cutset]{(\sourceindex;\relayset;\destindex)}$ takes the maximum value when the Gaussian variables $\varset[\relayset]$ are fully correlated.
\end{theorem}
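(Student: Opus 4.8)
The plan is to analyze the cutset expression in \equationname~\ref{eq:Arefcutset1M1} cut-by-cut and show that, holding all power constraints and all source-relay correlations $\corrsr[\sourceindex]{\relayindex}$ fixed, each term $\information\left(\var[\sourceindex]\,\varset[\relaysubset]\,;\,\othervar[\destindex]\,\othervarset[\relaysubset^C]|\varset[\relaysubset^C]\right)$ is non-decreasing in the relay-relay correlations $\corrrr[\relayindex]{\otherrelay}$, so that setting $\corrrr[\relayindex]{\otherrelay}=1$ for all $\relayindex,\otherrelay$ simultaneously maximizes the inner minimum. First I would split each cut's mutual information into the ``broadcast'' contribution $\information(\var[\sourceindex]\varset[\relaysubset];\othervarset[\relaysubset^C]\mid\varset[\relaysubset^C])$ and the ``multiple-access'' contribution $\information(\var[\sourceindex]\varset[\relaysubset];\othervar[\destindex]\mid\varset[\relaysubset^C]\othervarset[\relaysubset^C])$ via the chain rule. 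For Gaussian inputs the first piece only involves the channels \eqref{eq:sRbroadcast} from the source to the relays in $\relaysubset^C$, which by \eqref{eq:sRbroadcast} depend only on $\pathgain[\sourceindex]{\relayindex}$, the source power $\powermax[\sourceindex]{}$, and the conditioning on $\varset[\relaysubset^C]$ — none of which involves $\corrrr[\relayindex]{\otherrelay}$; so that term is constant in the relay correlations and the whole dependence sits in the multiple-access term at the destination, governed by \eqref{eq:Rdestreceive}.

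Next I would write the destination term explicitly for jointly Gaussian $(\var[\sourceindex],\varset[\relayset])$ as the log-det of the relevant conditional covariance. Using \eqref{eq:Rdestreceive}, the signal reaching $\destindex$ from the relays in $\relaysubset$ is $\sum_{\relayindex\in\relaysubset}\sqrt{\pathgain[\relayindex]{\destindex}}\,\var[\relayindex]$, and conditioned on $\varset[\relaysubset^C]$ (and on $\othervarset[\relaysubset^C]$) the conditional variance of this aggregate is a quadratic form whose cross terms are exactly $\sqrt{\pathgain[\relayindex]{\destindex}\pathgain[\otherrelay]{\destindex}}\,\corrrr[\relayindex]{\otherrelay}\sqrt{\powermax[\relayindex]{}\powermax[\otherrelay]{}}$ (after accounting for the conditioning, which can only reduce the relevant variances but preserves the monotone structure). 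The key step is to argue that the received SNR-type quantity at the destination — hence the mutual information, since $\log$ is monotone — is non-decreasing in each $\corrrr[\relayindex]{\otherrelay}$ when all path gains $\pathgain[\relayindex]{\destindex}$ are non-negative: coherent combining of the relay contributions maximizes the useful signal power. I would make this precise by noting that $\left(\sum_{\relayindex\in\relaysubset}\sqrt{\pathgain[\relayindex]{\destindex}}\,\var[\relayindex]\right)$ has variance $\sum_{\relayindex,\otherrelay\in\relaysubset}\sqrt{\pathgain[\relayindex]{\destindex}\pathgain[\otherrelay]{\destindex}\powermax[\relayindex]{}\powermax[\otherrelay]{}}\;\corrrr[\relayindex]{\otherrelay}$, which is linear and increasing in each $\corrrr[\relayindex]{\otherrelay}$, and then checking that the conditioning on $\varset[\relaysubset^C]$, $\othervarset[\relaysubset^C]$ — whose joint law with $\varset[\relaysubset]$ also has correlations that grow with the $\corrrr{}$'s — does not reverse this; a Schur-complement / MMSE argument shows the conditional variance remains monotone.

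The main obstacle I anticipate is precisely this last point: making sure the \emph{conditional} mutual information (conditioning on $\varset[\relaysubset^C]$ and $\othervarset[\relaysubset^C]$) is still monotone, because increasing $\corrrr[\relayindex]{\otherrelay}$ simultaneously increases the signal power \emph{and} increases the correlation between $\varset[\relaysubset]$ and the conditioning variables $\varset[\relaysubset^C]$ (which tends to reduce the conditional variance). One must verify the net effect is still favorable — intuitively it is, since conditioning on $\varset[\relaysubset^C]$ removes a component that is itself being made more predictable, but a clean way is to first reduce to the fully-correlated configuration on $\relaysubset^C$ and on $\relaysubset$ separately (each sub-block collapses to a single effective variable), then optimize the single remaining cross-correlation, reducing everything to scalar calculus. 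Finally, since every one of the $2^{\relaynumber}$ cut terms is individually maximized at $\corrrr[\relayindex]{\otherrelay}=1\;\forall\relayindex,\otherrelay$, and the pointwise minimum of functions each maximized at the same point is maximized there, the inner $\min_{\relaysubset\subseteq\relayset}$ and then the outer $\sup$ over admissible input distributions are jointly maximized by the fully-correlated choice, which establishes \theoremname~\ref{th:corrrr1}.
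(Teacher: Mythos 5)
Your chain-rule decomposition is genuinely different from the paper's, but it contains one incorrect claim and leaves the decisive step open. The incorrect claim is that the ``broadcast'' piece $\information\left(\var[\sourceindex]\,\varset[\relaysubset]\,;\,\othervarset[\relaysubset^C]|\varset[\relaysubset^C]\right)$ is constant in the relay--relay correlations. It is not: that piece equals $\entropy\left(\othervarset[\relaysubset^C]|\varset[\relaysubset^C]\right)$ minus a pure noise entropy, and the conditional covariance of $\othervarset[\relaysubset^C]$ given $\varset[\relaysubset^C]$ is governed by $\Varop\left(\var[\sourceindex]|\varset[\relaysubset^C]\right)$, which depends on the correlations \emph{among the conditioning relays}. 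For instance, with two relays in $\relaysubset^C$, each having correlation $\corrsr[]{}$ with the source and mutual correlation $\corrrr[\relayindex]{\otherrelay}$, one gets $\Varop\left(\var[\sourceindex]|\varset[\relaysubset^C]\right)=\powermax[\sourceindex]{}\bigl(1-2{\corrsr[]{}}^2/(1+\corrrr[\relayindex]{\otherrelay})\bigr)$, which is strictly increasing in $\corrrr[\relayindex]{\otherrelay}$. (That term also happens to be maximized at full correlation, so the conclusion survives, but your argument dismisses it for a false reason and would in fact need a separate monotonicity proof there as well.) The more serious issue is that you explicitly flag the monotonicity of the conditional multiple-access term --- where raising $\corrrr[\relayindex]{\otherrelay}$ simultaneously boosts the coherently combined signal power at $\destindex$ \emph{and} makes $\varset[\relaysubset]$ more predictable from the conditioning variables --- as something that ``must be verified.'' That trade-off is essentially the whole content of \theoremname~\ref{th:corrrr1}; the sketched fix (collapse each sub-block to one effective variable, then scalar calculus) is plausible but is not carried out, so the proof does not close as written.

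The paper sidesteps this tension by choosing the other natural decomposition: it writes the cut value as $\entropy\left(\othervar[\destindex]\,\othervarset[\relaysubset^C]|\varset[\relaysubset^C]\right)-\entropy\left(\othervar[\destindex]\,\othervarset[\relaysubset^C]|\varset[\relayset]\var[\sourceindex]\right)$, observes that the subtracted term is the entropy of the additive noises and hence independent of every correlation coefficient, and is then left with maximizing a single conditional output entropy, which it handles by ``conditioning reduces entropy, with equality when the conditioning variables are fully redundant.'' Whatever one thinks of the brevity of that last step, the decomposition itself is the cleaner one: there is only one correlation-dependent quantity left to maximize, rather than two chain-rule pieces each requiring its own Schur-complement monotonicity argument. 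If you want to keep your route, you should at minimum (i) retract the ``constant'' claim and prove monotonicity of the broadcast piece via the conditional-variance formula above, and (ii) actually execute the reduction-to-scalar argument for the destination term instead of asserting that the net effect is favorable.
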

\begin{proof}
We expand \equationname~\ref{eq:Arefcutset1M1} as:
\begin{multline*}
\information\left(\var[\sourceindex]\,\varset[\relaysubset]\,;\,\othervar[\destindex]\,\othervarset[\relaysubset^C]|\varset[\relaysubset^C]\right)=\\ \qquad=\entropy\left(\,\othervar[\destindex]\,\othervarset[\relaysubset^C]|\varset[\relaysubset^C]\right)- \entropy\left(\,\othervar[\destindex]\,\othervarset[\relaysubset^C]|\varset[\relayset]\var[\sourceindex]\right).
\end{multline*}
In the second term, $\entropy\left(\,\othervar[\destindex]\,\othervarset[\relaysubset^C]|\varset[\relayset]\var[\sourceindex]\right)$ is not a function of $\corrrr[\relayindex]{\otherrelay}$. To prove the theorem, it is enough to show that the maximum of $\entropy\left(\,\othervar[\destindex]\,\othervarset[\relaysubset^C]|\varset[\relaysubset^C]\right)$ is achieved when ${\corrrr[\relayindex]{\otherrelay}=1}$ $\forall\,\relayindex,\otherrelay\in\relayset$, irrespective of the value $\corrsr[\sourceindex]{\relayindex}$. This is somehow obvious for Gaussian random variables because conditioning reduces entropy except the case of fully correlated given variables condition. That is $\entropy\left(A|B\right)\ge\entropy\left(A|B, C\right)$ , with equality whenever $B$ and $C$ are fully correlated, or $A$ is conditionally independent of $C$ given $B$, i.e. $A\leftrightarrow B\leftrightarrow C$.
\end{proof}

The following results can be derived from \theoremname~\ref{th:corrrr1}. First, it is reasonable if we assume the same correlation coefficient between $\var[\sourceindex]$ and every $\var[\relayindex]${s}. Thus, in the rest of this section, we suppose $\corrsr[]{}=\corrsr[\sourceindex]{\relayindex}\;\forall\,\relayindex\in\relayset$. The second result is that  $\covs\lefto[\othervar[]|\varset[\relayset]\right]=\Varop(\othervar[]|\var[\relayindex])$.

%%In \propname~\ref{pro:Arefcutset}, suppose the mutual information term achieves its minimum with an $\setname\subseteq\relayset$, i.e.  $\upperbound[cutset]{(\sourceindex;\relayset;\destindex)}=\sup_{p(\varsample[\sourceindex], \varsetsample[\relayset])} \information\left(\var[\sourceindex]\varset[\setname]\,;\,\othervar[\destindex]\othervarset[\setname^C]|\varset[\setname^C]\right)$.
%%This results that the members of $\setname$ are in a sufficiently good condition to receive signals from the transmitter with probability close to zero, and the members of $\setname^C$ are in a sufficiently good condition to deliver signals to the destination with infinite precision. Equivalently, $\snr[\sourceindex]{\relayindex}>\snr[\relayindex]{\destindex}$ for all $\relayindex\in\setname$, and $\snr[\sourceindex]{\relayindex}<\snr[\relayindex]{\destindex}$ for all $\relayindex\in\setname^C$.
%%If $\setname=\emptyset$, the $\upperbound[cutset]{(\sourceindex;\relayset;\destindex)}$ is equal to the broadcast term, $\information\left(\var[\sourceindex]\,;\,\othervar[\destindex]\,\othervarset[\relayset]|\varset[\relayset]\right)$, and the supremum is achieved with $\corrsr[]{}=0$.
%%On the other hand, if $\setname=\relayset$, the $\upperbound[cutset]{(\sourceindex;\relayset;\destindex)}$ is equal to the multiple{-}access term, i.e. $\information\left(\var[\sourceindex]\,\varset[\relayset];\,\othervar[\destindex]\,\right)$, and the supremum is achieved when $\var[\sourceindex]$ and $\var[\relayindex]$ are fully correlated, i.e. $\corrsr[]{}=1$.

In the following, we find a general formula, considering all $2^{\relaynumber}$ cuts, for a relayed communication consisting of one source, multiple parallel relays and one destination where each node has its own individual power constraint.

\begin{theorem}\label{th:cutsetMrelays}
For one source, multiple parallel relays, one destination network, the cutset upper bound of \propname~\ref{pro:Arefcutset} is shortened to:
\begin{multline}\label{eq:cutset1M1}
\displaystyle\upperbound[cutset]{(\sourceindex;\relayset;\destindex)} = \underset{p(\varsample[\sourceindex], \varsetsample[\relayset])}{\sup} \min \left\{\min_{\relayindex\in\relayset}{\{\information\left(\var[\sourceindex]\,;\,\othervar[\destindex]\,\othervar[\relayindex]|\var[\relayindex]\right)\}};\right.\\
\information\left(\var[\sourceindex]\,\varset[\relayset]\,;\,\othervar[\destindex]\right)\bigg\}\quad
\end{multline}
\end{theorem}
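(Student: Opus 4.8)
The plan is to show that among the $2^{\relaynumber}$ cuts appearing in \equationname~\ref{eq:Arefcutset1M1}, all of them except the $\relaynumber$ singleton cuts $\relaysubset=\{\relayindex\}$ and the full multiple-access cut $\relaysubset=\relayset$ are redundant, because once we invoke \theoremname~\ref{th:corrrr1} and set $\corrrr[\relayindex]{\otherrelay}=1$ for all $\relayindex,\otherrelay\in\relayset$, the relay outputs $\var[\relayset]$ are all deterministic (affine) functions of a single Gaussian variable, say $\var[\relayindex]$. First I would fix an arbitrary cut $\relaysubset\subseteq\relayset$ and write, exactly as in the proof of \theoremname~\ref{th:corrrr1},
\begin{equation*}
\information\left(\var[\sourceindex]\,\varset[\relaysubset]\,;\,\othervar[\destindex]\,\othervarset[\relaysubset^C]|\varset[\relaysubset^C]\right)=\information\left(\var[\sourceindex]\,\varset[\relaysubset]\,;\,\othervar[\destindex]|\varset[\relaysubset^C]\right)+\information\left(\var[\sourceindex]\,\varset[\relaysubset]\,;\,\othervarset[\relaysubset^C]|\varset[\relaysubset^C]\,\othervar[\destindex]\right).
\end{equation*}
Under full correlation, for any nonempty $\relaysubset$, conditioning on $\varset[\relaysubset^C]$ (if $\relaysubset^C$ is nonempty) already fixes every $\var[\relayindex]$, so only the cases where one of the two index sets is empty or a singleton survive as genuinely distinct constraints; the second mutual-information term collapses to the observation-noise contribution from the relays in $\relaysubset^C$, which is what turns the bound into the side-information term $\othervar[\relayindex]$ in the first minimand of \equationname~\ref{eq:cutset1M1}.

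Concretely, the key steps in order are: (i) apply \theoremname~\ref{th:corrrr1} to restrict the supremum to jointly Gaussian $(\var[\sourceindex],\varset[\relayset])$ with $\corrrr[\relayindex]{\otherrelay}=1$, and invoke the two corollaries already stated (common source--relay correlation $\corrsr[]{}$, and $\covs[\othervar[]|\varset[\relayset]]=\Varop(\othervar[]|\var[\relayindex])$); (ii) for a cut $\relaysubset$ with $\emptyset\neq\relaysubset\subsetneq\relayset$, use the fact that $\varset[\relaysubset^C]$ determines $\varset[\relaysubset]$ a.s.\ to show $\information\left(\var[\sourceindex]\,\varset[\relaysubset]\,;\,\othervar[\destindex]\,\othervarset[\relaysubset^C]|\varset[\relaysubset^C]\right)=\information\left(\var[\sourceindex]\,;\,\othervar[\destindex]\,\othervar[\relayindex]|\var[\relayindex]\right)$ for any $\relayindex\in\relaysubset^C$, so the value depends only on whether $\relaysubset^C$ is empty, and never on $|\relaysubset|$ or its identity; (iii) handle the two extreme cuts separately — $\relaysubset=\relayset$ gives the broadcast-side term $\information\left(\var[\sourceindex]\,\varset[\relayset]\,;\,\othervar[\destindex]\right)$, and $\relaysubset=\emptyset$ (or any singleton-complement) gives the multiple-access-side term $\information\left(\var[\sourceindex]\,;\,\othervar[\destindex]\,\othervar[\relayindex]|\var[\relayindex]\right)$; (iv) conclude that $\min_{\relaysubset\subseteq\relayset}$ reduces to the minimum over the single term $\information\left(\var[\sourceindex]\,\varset[\relayset]\,;\,\othervar[\destindex]\right)$ and the $\relaynumber$ terms $\information\left(\var[\sourceindex]\,;\,\othervar[\destindex]\,\othervar[\relayindex]|\var[\relayindex]\right)$, $\relayindex\in\relayset$, which is precisely \equationname~\ref{eq:cutset1M1}. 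It remains to check that each surviving singleton term really is minimized over $\relayindex$ rather than collapsing further; since the relays have distinct path gains $\pathgain[\sourceindex]{\relayindex}$ and $\pathgain[\relayindex]{\destindex}$ and distinct power budgets $\powermax[\relayindex]{}$, these $\relaynumber$ terms are genuinely different, so the inner $\min_{\relayindex\in\relayset}$ must be retained.

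The main obstacle I expect is step (ii): making rigorous the claim that conditioning on $\varset[\relaysubset^C]$ for a \emph{proper} nonempty $\relaysubset$ renders the cut identical to a canonical one. When two jointly Gaussian variables are ``fully correlated'' ($|\corrrrSymbol|=1$), one is an affine function of the other with probability one, so conditional entropies and mutual informations must be interpreted in the degenerate-covariance sense (the conditional covariance of $\varset[\relaysubset]$ given $\var[\relayindex]$ is the zero matrix). I would make this precise by working throughout with the covariance matrix $\covs[\othervar[\destindex]\,\othervarset[\relaysubset^C],\,\var[\relayset]]$ and computing Schur complements, noting that the rank-one structure of the $\varset[\relayset]$-block means every Schur complement onto the destination/relay-observation coordinates depends on $\relaysubset$ only through the indicator $\mathbf{1}\{\relaysubset^C=\emptyset\}$. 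A secondary, milder subtlety is confirming that taking $\corrrr[\relayindex]{\otherrelay}=1$ is compatible with the individual power constraints $\expectation[{\var[\relayindex]^2}]\le\powermax[\relayindex]{}$ — it is, since full correlation constrains only the shape of the joint law, not the marginal variances — so the supremum in \equationname~\ref{eq:cutset1M1} is still taken over the same feasible set of marginal powers and the source--relay correlation $\corrsr[]{}$.
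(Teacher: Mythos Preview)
Your overall strategy matches the paper's: invoke \theoremname~\ref{th:corrrr1} to force $\corrrr[\relayindex]{\otherrelay}=1$, then exploit the resulting degeneracy of $\varset[\relayset]$ to collapse the $2^{\relaynumber}$ cut constraints. However, step~(ii) contains a genuine error. Full correlation among the relay \emph{inputs} $\var[\relayindex]$ lets you replace conditioning on $\varset[\relaysubset^C]$ by conditioning on any single $\var[\relayindex]$, and it lets you drop $\varset[\relaysubset]$ from the first argument (since $\varset[\relaysubset]$ is an a.s.\ function of $\varset[\relaysubset^C]$). But it does \emph{not} let you collapse the relay \emph{observations} $\othervarset[\relaysubset^C]$ to a single $\othervar[\relayindex]$: each $\othervar[\relayindex]=\sqrt{\pathgain[\sourceindex]{\relayindex}}\,\var[\sourceindex]+\noise[\relayindex]$ carries an independent noise $\noise[\relayindex]$, so the $\othervar[\relayindex]$'s are not deterministic functions of one another even after conditioning on $\var[\relayindex]$. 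The correct reduction is therefore
\[
\information\bigl(\var[\sourceindex]\,\varset[\relaysubset]\,;\,\othervar[\destindex]\,\othervarset[\relaysubset^C]\bigm|\varset[\relaysubset^C]\bigr)
=\information\bigl(\var[\sourceindex]\,;\,\othervar[\destindex]\,\othervarset[\relaysubset^C]\bigm|\var[\relayindex]\bigr),
\]
which still depends on $\relaysubset^C$ through the observation set $\othervarset[\relaysubset^C]$, not the single-relay value $\information(\var[\sourceindex];\othervar[\destindex]\,\othervar[\relayindex]\,|\,\var[\relayindex])$ you claim. Your proposal is in fact internally inconsistent on this point: if the cut value were equal to $\information(\var[\sourceindex];\othervar[\destindex]\,\othervar[\relayindex]\,|\,\var[\relayindex])$ for \emph{every} $\relayindex\in\relaysubset^C$, then any two singleton terms sharing a common cut would coincide, contradicting your closing remark that the $\relaynumber$ singleton terms are ``genuinely different''.

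The missing idea is a monotonicity step. Since enlarging the observation set can only increase mutual information,
\[
\information\bigl(\var[\sourceindex]\,;\,\othervar[\destindex]\,\othervarset[\relaysubset^C]\bigm|\var[\relayindex]\bigr)
\ \ge\ \information\bigl(\var[\sourceindex]\,;\,\othervar[\destindex]\,\othervar[\otherrelay]\bigm|\var[\relayindex]\bigr)
\quad\text{for any }\otherrelay\in\relaysubset^C,
\]
and hence the minimum over proper cuts $\relaysubset\subsetneq\relayset$ is attained at singleton complements $\relaysubset^C=\{\relayindex\}$, yielding $\min_{\relayindex\in\relayset}\information(\var[\sourceindex];\othervar[\destindex]\,\othervar[\relayindex]\,|\,\var[\relayindex])$. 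This is exactly how the paper closes the argument (its ``it is clear to conclude'' is precisely this monotonicity). With that one correction your plan goes through; the Schur-complement machinery you anticipate for the degenerate covariance is sound but heavier than needed, since the paper simply works with scalar conditional variances via the corollaries of \theoremname~\ref{th:corrrr1}.
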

\begin{proof}
The second term is the multiple{-}access capacity of all relays and the source node at the destination, i.e. $\relaysubset=\relayset$. To prove the theorem, it is enough to show that the following equality holds:
\begin{equation*}
\min_{\relaysubset\subsetneq\relayset} \;\left\{\information\left(\var[\sourceindex]\,\varset[\relaysubset]\,;\,\othervar[\destindex]\,\othervarset[\relaysubset^C]|\varset[\relaysubset^C]\right)\right\} = \,\min_{\relayindex\in\relayset}\,\{\information\left(\var[\sourceindex]\,;\,\othervar[\destindex]\,\othervar[\relayindex]|\var[\relayindex]\right)\} .
\end{equation*}
The expansion of the left{-}hand side gives:
\begin{flalign*}
&\;\displaystyle\information\left(\var[\sourceindex]\,\varset[\relaysubset]\,;\,\othervar[\destindex]\,\othervarset[\relaysubset^C]|\varset[\relaysubset^C]\right) =\\
&\qquad\qquad\underbrace{\information\left(\,\varset[\relaysubset]\,;\,\othervar[\destindex]\,\othervarset[\relaysubset^C]|\varset[\relaysubset^C]\right)}_{(1)\,=\,0} +\,\underbrace{\information\left(\var[\sourceindex]\,;\,\othervar[\destindex]\,\othervarset[\relaysubset^C]|\varset[\relayset]\right)}_{(2)}\\
&\;(1):\,\information\left(\,\varset[\relaysubset]\,;\,\othervar[\destindex]\,\othervarset[\relaysubset^C]|\varset[\relaysubset^C]\right)=\frac{1}{2}\log_{2}{\frac{\covs\lefto[\,\othervar[\destindex]\,\othervarset[\relaysubset^C]|\varset[\relaysubset^C]\right]}{\covs\lefto[\,\othervar[\destindex]\,\othervarset[\relaysubset^C]|\varset[\relayset]\right]}} =\\
&\qquad\qquad\qquad\qquad=\frac{1}{2}\log_{2}{\frac{\covs\lefto[\,\othervar[\destindex]\,\othervarset[\relaysubset^C]|\var[\relayindex\in\relaysubset^C]\right]}{\covs\lefto[\,\othervar[\destindex]\,\othervarset[\relaysubset^C]|\var[\relayindex\in\relaysubset^C]\right]}} = 0.&\\
&\;(2):\,\information\left(\var[\sourceindex]\,;\,\othervar[\destindex]\,\othervarset[\relaysubset^C]|\varset[\relayset]\right)=\,\information\left(\var[\sourceindex]\,;\,\othervar[\destindex]\,\othervarset[\relaysubset^C]|\var[\relayindex]\right) \qquad\qquad\qquad\qquad\qquad\qquad\qquad
\end{flalign*}
Until now, we demonstrated that:
\begin{equation*}
\displaystyle\min_{\relaysubset\subsetneq\relayset} \!\left\{\information\left(\var[\sourceindex]\,\varset[\relaysubset];\othervar[\destindex]\,\othervarset[\relaysubset^C]|\varset[\relaysubset^C]\right)\right\} \!=\!\min_{\relaysubset\subsetneq\relayset} \!\left\{\information\left(\var[\sourceindex];\othervar[\destindex]\,\othervarset[\relaysubset^C]|\var[\relayindex]\right)\right\}
\end{equation*}
Since $\relaysubset$ is every strictly subset of $\relayset$, it is clear to conclude that:
\begin{equation*}
\displaystyle\min_{\relaysubset\subsetneq\relayset} \;\left\{\information\left(\var[\sourceindex]\,;\,\othervar[\destindex]\,\othervarset[\relaysubset^C]|\var[\relayindex]\right)\right\}=\,\min_{\relayindex\in\relayset}\{\information\left(\var[\sourceindex]\,;\,\othervar[\destindex]\,\othervar[\relayindex]|\var[\relayindex]\right)\}.
\end{equation*}
\end{proof}

The result of \equationname~\ref{eq:cutset1M1} is the minimum value between $\relaynumber+1$ mutual information terms.
The first term of the $\upperbound[cutset]{(\sourceindex;\relayset;\destindex)}$ is the broadcast capacity from the source node to the destination and the relay with which has the minimum information flow rate. The second term is the multiple{-}access capacity of all relays and the transmitter from the destination point of view.
The novel result of \theoremname~\ref{th:cutsetMrelays} is:
\emph{the upper bound cutset capacity of a point{-}to{-}point multiple parallel relayed network is achieved either using one relay only or using all relays together.}
If all relays are in good conditions to transmit signals to the destination rather than to receive signals from the source, $\upperbound[cutset]{(\sourceindex;\relayset;\destindex)}=\upperbound[cutset]{(\sourceindex;\relayindex;\destindex)}$ wherein $\relayindex$ is the relay with which the source node achieves the smallest broadcast capacity of $\information\left(\var[\sourceindex]\,;\,\othervar[\destindex]\,\othervar[\relayindex]|\var[\relayindex]\right)$. \emph{In such a network the existence of all others relays is useless}.
On the other hand, if all relays are in good conditions to receive data from the transmitter rather than to convey signals to the destination,
the $\upperbound[cutset]{(\sourceindex;\relayset;\destindex)}$ is achieved with maximization of the multiple{-}access capacity of all relays and the source node at the destination, i.e. $\information\left(\var[\sourceindex]\,\varset[\relayset]\,;\,\othervar[\destindex]\right)$.
As can be seen, the capacity of the broadcast term of $\upperbound[cutset]{(\sourceindex;\relayset;\destindex)}$ is much less than that presented by \cite[Sec. 3.2.2]{delcoso-thesis} and \cite[Sec. 5.4]{gastpar-thesis}. This is because, \cite{delcoso-thesis} and \cite{gastpar-thesis} consider only two broadcast and multiple-access cuts rather than all $2^\relaynumber$ possible cuts.

At this point, we recall some useful equalities relevant to the equations of \figurename~\ref{fig:onesmultiplerencdec}, with $\corrrr[\relayindex]{\otherrelay}=1\;\forall\,\relayindex, \otherrelay\in\relayset$. To review the algebraic manipulation of the following formulas see \cite[Ch. 5]{Allan09}.
\begin{subequations}\label{eqs:statistics1M1}
\begin{flalign}
&\!\Varop\left(\othervar[\destindex]\right)=\pathgain[\sourceindex]{\destindex}\powermax[\sourceindex]{}+\sum_{\relayindex\in\relayset}{2\corrsr[]{}\sqrt{\pathgain[\sourceindex]{\destindex}\powermax[\sourceindex]{}\pathgain[\relayindex]{\destindex}\powermax[\relayindex]{}}}
+\nonumber\\ &\quad\qquad\qquad\qquad+\sum_{\relayindex\in\relayset}\sum_{\otherrelay\in\relayset}{\sqrt{\pathgain[\relayindex]{\destindex}\powermax[\relayindex]{}\pathgain[\otherrelay]{\destindex}\powermax[\otherrelay]{}}}\;+\noisePower\\
%%&\!\covs\lefto[\var[\sourceindex] | \varset[\relayset]\right]= \Varop\left(\var[\sourceindex] | \var[\relayindex]\right)=\powermax[\sourceindex]{}\left(1-\corrsr[]{}^2\right)\;\forall\;\relayindex\in\relayset\label{subeq:varallrelays}\; \\
%%&\!\covs\lefto[\othervar[\relayindex] | \varset[\relayset]\right] = \Varop\left(\othervar[\relayindex] | \var[\relayindex]\right) = \pathgain[\sourceindex]{\relayindex}\powermax[\sourceindex]{}\left(1-\corrsr[]{}^2\right) + \noisePower\label{subeq:YrXr}\\
%%&\!\covs\lefto[\othervar[\destindex]|\varset[\relayset]\right] = \Varop\left(\othervar[\destindex]|\var[\relayindex]\right)= \pathgain[\sourceindex]{\destindex}\powermax[\sourceindex]{}\left(1-\corrsr[]{}^2\right)+\noisePower\\
&\covs\lefto[\othervar[\destindex]\othervar[\relayindex]|\varset[\relayset]\right]\!=\! \covs\lefto[\othervar[\destindex]\othervar[\relayindex]|\var[\relayindex]\right]\!=\!\left(\pathgain[\sourceindex]{\destindex}\!+\! \pathgain[\sourceindex]{\relayindex}\right)\powermax[\sourceindex]{}\!\left(1\!-\!{\corrsr[]{}}^2\right)\!+\!\noisePower\\
&\covs\lefto[\othervar[\destindex]\othervar[\relayindex]|\varset[\relayset]\var[\sourceindex]\right]=\covs\lefto[\othervar[\destindex]\othervar[\relayindex]|\var[\relayindex]\var[\sourceindex]\right]=\noisePower\\
&\covs\lefto[\othervar[\destindex]|\varset[\relayset]\var[\sourceindex]\right]=\noisePower
\end{flalign}
\end{subequations}

%%&\Covop\left(\othervar[\destindex] , \var[\relayindex]\right)= \corrsr[]{}\sqrt{\pathgain[\sourceindex]{\destindex}\powermax[\sourceindex]{}\powermax[\relayindex]{}} + \sum_{\relayindex\in\relayset}{\sqrt{\pathgain[\relayindex]{\destindex}}\,\,\powermax[\relayindex]{}}\\

In the following we introduce the capacity of the cooperative framework depicted by \figurename~\ref{fig:onesmultiplerencdec} with Gaussian channels.
\begin{theorem}\label{th:AWGNonesmultipleRcutset}
The AWGN cutset upper bound of the capacity one source, multiple parallel relays is:
\begin{multline*}
\!\upperbound[cutset]{(\sourceindex;\relayset;\destindex)}\!=\!\underset{-1\le\corrsr[]{}\le1}{\sup}\!\min\! \left\{\!\min_{\relayindex\in\relayset}\!\left\{\!\capacitySymbol\!\left(\frac{\covs\lefto[\othervar[\destindex]\othervar[\relayindex]|\var[\relayindex]\right]\!-\!\noisePower}{\noisePower}\right)\!\right\} \;\right.\\
,\;\capacitySymbol\left(\frac{\Varop\left(\othervar[\destindex]{}\right)-\noisePower}{\noisePower}\right)\bigg\}
\end{multline*}
\end{theorem}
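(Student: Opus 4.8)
\emph{Proof strategy.} The plan is to take the reduced cutset expression of \theoremname~\ref{th:cutsetMrelays} and evaluate its $\relaynumber+1$ mutual information terms in closed form under jointly Gaussian inputs. First I would invoke \theoremname~\ref{th:corrrr1}: the supremum in \equationname~\ref{eq:cutset1M1} is attained with $\corrrr[\relayindex]{\otherrelay}=1$ for all $\relayindex,\otherrelay\in\relayset$, so the relay block $\varset[\relayset]$ degenerates to affine images of one Gaussian, conditioning on any single $\var[\relayindex]$ coincides with conditioning on $\varset[\relayset]$, and --- using the common-correlation normalization $\corrsr[]{}\dfn\corrsr[\sourceindex]{\relayindex}$ adopted right after that theorem --- the only remaining free second-order parameter of $p(\varsample[\sourceindex],\varsetsample[\relayset])$ is the scalar $\corrsr[]{}\in[-1,1]$. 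A maximum-entropy argument then makes a zero-mean jointly Gaussian law (at the prescribed power levels) simultaneously optimal inside the $\min$: since the maps $\var[\sourceindex]\mapsto\othervar[\relayindex]$ and $(\var[\sourceindex],\varset[\relayset])\mapsto\othervar[\destindex]$ of \equationname~\ref{eq:sRbroadcast}--\equationname~\ref{eq:Rdestreceive} are linear with additive Gaussian noise, each subtracted conditional entropy equals a fixed noise entropy while the positive terms $\entropy(\othervar[\destindex]\,\othervar[\relayindex]\,|\,\var[\relayindex])$ and $\entropy(\othervar[\destindex])$ are maximized, for the given covariances, by the Gaussian choice. Hence $\sup_{p(\varsample[\sourceindex],\varsetsample[\relayset])}$ collapses to $\sup_{-1\le\corrsr[]{}\le1}$.

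It then remains to substitute the closed-form statistics of \equationname~\ref{eqs:statistics1M1} (which already encode $\corrrr[\relayindex]{\otherrelay}=1$). For the multiple-access term I would expand $\information(\var[\sourceindex]\,\varset[\relayset]\,;\,\othervar[\destindex])=\entropy(\othervar[\destindex])-\entropy(\othervar[\destindex]\,|\,\var[\sourceindex]\,\varset[\relayset])=\tfrac12\log_2\!\big(\Varop(\othervar[\destindex])/\noisePower\big)$, using $\covs\lefto[\othervar[\destindex]|\varset[\relayset]\var[\sourceindex]\right]=\noisePower$; with $\capacitySymbol(\gamma)=\tfrac12\log_2(1+\gamma)$ this is precisely $\capacitySymbol\!\big((\Varop(\othervar[\destindex])-\noisePower)/\noisePower\big)$, and $\Varop(\othervar[\destindex])$ is the explicit expression in \equationname~\ref{eqs:statistics1M1}. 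For each broadcast term I would condition on $\var[\relayindex]$: then $(\othervar[\destindex],\othervar[\relayindex])$ is the output of a single-input/two-output AWGN channel whose input is $\var[\sourceindex]$ given $\var[\relayindex]$, of variance $\powermax[\sourceindex]{}(1-{\corrsr[]{}}^2)$, with independent noise components of variance $\noisePower$; expanding $\information(\var[\sourceindex]\,;\,\othervar[\destindex]\,\othervar[\relayindex]\,|\,\var[\relayindex])=\entropy(\othervar[\destindex]\,\othervar[\relayindex]\,|\,\var[\relayindex])-\entropy(\othervar[\destindex]\,\othervar[\relayindex]\,|\,\var[\relayindex]\,\var[\sourceindex])$ and using $\covs\lefto[\othervar[\destindex]\othervar[\relayindex]|\var[\relayindex]\var[\sourceindex]\right]=\noisePower$ collapses it to $\tfrac12\log_2\!\big(\covs\lefto[\othervar[\destindex]\othervar[\relayindex]|\var[\relayindex]\right]/\noisePower\big)=\capacitySymbol\!\big((\covs\lefto[\othervar[\destindex]\othervar[\relayindex]|\var[\relayindex]\right]-\noisePower)/\noisePower\big)$, with $\covs\lefto[\othervar[\destindex]\othervar[\relayindex]|\var[\relayindex]\right]$ again read off \equationname~\ref{eqs:statistics1M1}. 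Plugging both into \equationname~\ref{eq:cutset1M1} gives the stated formula.

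The algebra behind \equationname~\ref{eqs:statistics1M1} is routine, so the real content lies in the first step: the joint Gaussian optimality for the \emph{minimum} of the $\relaynumber+1$ terms (which works only because the subtracted entropies are input-independent for a linear-AWGN channel) and the reduction of the infinite-dimensional $\sup_p$ to a single scalar once the relay variables are forced to be fully correlated. The point I expect to be the main obstacle is verifying that operating every node at its power limit is without loss of optimality for the $\min$: the broadcast terms depend on the source power only through $\powermax[\sourceindex]{}(1-{\corrsr[]{}}^2)$ and not at all on the relay powers, whereas $\Varop(\othervar[\destindex])$ also carries the cross-terms $2\corrsr[]{}\sqrt{\pathgain[\sourceindex]{\destindex}\powermax[\sourceindex]{}\pathgain[\relayindex]{\destindex}\powermax[\relayindex]{}}$, which turn negative for $\corrsr[]{}<0$; one must argue that any interior power allocation is dominated, so that the maximization can indeed be restricted to the one-parameter Gaussian family appearing in the statement.
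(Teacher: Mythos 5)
Your proposal is correct and follows essentially the same route as the paper: both reduce the problem to evaluating the two kinds of mutual-information terms in \equationname~\ref{eq:cutset1M1} as ratios of conditional Gaussian (co)variances taken from \equationsname~\ref{eqs:statistics1M1}, and then identify each with $\capacitySymbol(\gamma)=\tfrac{1}{2}\log_2(1+\gamma)$. The paper's proof is exactly that two-line computation; the additional justifications you supply --- Gaussian optimality simultaneously for all terms inside the $\min$, the collapse of the supremum over input distributions to the single scalar $\corrsr[]{}$ after \theoremname~\ref{th:corrrr1}, and the admissibility of operating every node at full power --- are taken for granted in the paper and are sound.
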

which is calculated using \equationsname~\ref{eqs:statistics1M1}.
\begin{proof}
\begin{flalign*}
\;&\displaystyle\information(\var[\sourceindex]\;;\; \othervar[\destindex]{}\,\othervar[\relayindex]{} | \var[\relayindex])=\frac{1}{2}\log_2{\frac{\covs\lefto[\othervar[\destindex]\othervar[\relayindex]|\var[\relayindex]\right]}{\covs\lefto[\othervar[\destindex]\othervar[\relayindex]|\var[\relayindex]\var[\sourceindex]\right]}}
=&\\
&\quad\qquad=\frac{1}{2}\log_2{\frac{\covs\lefto[\othervar[\destindex]\othervar[\relayindex]|\var[\relayindex]\right]}{\noisePower}}=
\capacitySymbol\left(\frac{\covs\lefto[\othervar[\destindex]\othervar[\relayindex]|\var[\relayindex]\right]-\noisePower}{\noisePower}\right);\\
&\displaystyle\information(\var[\sourceindex]\,\varset[\relayset]\;;\;\othervar[\destindex]{}) = \frac{1}{2}\log_2{\frac{\Varop\left(\othervar[\destindex]{}\right)}{\covs\lefto[\othervar[\destindex]|\var[\sourceindex]\,\varset[\relayset]\right]}} =\\
&\qquad\qquad\qquad=\frac{1}{2}\log_2{\frac{\Varop\left(\othervar[\destindex]{}\right)}{\noisePower}}
=\capacitySymbol\left(\frac{\Varop\left(\othervar[\destindex]{}\right)-\noisePower}{\noisePower}\right).\;\;
\end{flalign*}
\end{proof}

If all relays are much closer to the destination than source node, or equivalently $\SNR[\sourceindex]{\relayindex}\!<\!\SNR[\relayindex]{\destindex}$ $\forall\,\relayindex\!\in\!\relayset$, then $\upperbound[cutset]{(\sourceindex;\relayset;\destindex)}\!=\!\upperbound[cutset]{(\sourceindex;\relayindex;\destindex)}$ wherein $\relayindex$ is the relay with the weakest $\SNR[\sourceindex]{\relayindex}$ channel. This means that the others relays make only ``crowd" and increase the overall power consumption. In such a relay assisted network the supremum is achieved with $\corrsr[]{}=0$. On the other hand, if all relays are located in the contrary positions, i.e. they are much closer to the transmitter than destination, $\upperbound[cutset]{(\sourceindex;\relayset;\destindex)}$ is achieved with the multiple{-}access term (the second term) and the supremum is achieved when $\var[\sourceindex]$ and $\varset[\relayset]$ are fully correlated, i.e. $\corrsr[]{}=1$.

In a multiple parallel relay network wherein all relays are well located to correctly receive signals from the transmitter, i.e. $\SNR[\sourceindex]{\relayindex}>\SNR[\relayindex]{\destindex}$, the $\upperbound[cutset]{(\sourceindex;\relayset;\destindex)}$ dominates the parallel channels capacity that is $\capacitySymbol(\SNR[\sourceindex]{\destindex}+\sum_{\relayindex\in\relayset}{\SNR[\relayindex]{\destindex}})$. This is because Gaussian variables $\varset[\relayset]$ are fully correlated.
In such a reliable network, adding a new relay, always close to the source node, expands the upper bound of the cutset capacity region, up to the number of relays until which the multiple{-}access capacity does not exceed that broadcast (the first term).

Another interesting result of the \theoremname~\ref{th:AWGNonesmultipleRcutset} is:
\emph{adding a new relay does not always increase the cutset upper bound capacity}. In a one source{-}multiple parallel relays{-}one destination system, locating a new relay very close to the destination may decrease the upper bound capacity, and adding a new relay very close to the transmitter may do not change upper bound capacity.

\section{Amplify and forward technique}\label{sec:AFTech1M1}

In a reliable network applying AF technique, the relays do not have any code space. There exist two Gaussian codebooks: $\varSpace[\sourceindex]$ at the source's encoder, and $\othervarSpace[\destindex]$ at the destination's decoder. Suppose a transmit message $\signal[\sourceindex]{}$ which is a sequence of $\blocknumber$ sub{-}messages $\signal[\sourceindex]{1},...,\signal[\sourceindex]{\blockindex},...,\signal[\sourceindex]{\blocknumber}$ and each sub{-}message $\signal[\sourceindex]{\blockindex}$ is uniformly drawn from $\signalset[\sourceindex]=\{0, 1, ..., 2^{\signalnumber\rate[\sourceindex]{}}\!-\!1\}$. Each sub-message $\signal[\sourceindex]{\blockindex}$ is separately encoded to $\var[\sourceindex][\blockindex]\left(\signal[\sourceindex]{\blockindex}\right)$ under the constraint that $\expectation[{\var[\sourceindex]^2}]\le\powermax[\sourceindex]{}$. In each block index $\blockindex$, each relay \relayindex scales the amplitude of the analog observed signal $\othervar[\relayindex][\blockindex-1]$ as:
\begin{equation}\label{eq:AFMrelays}
\begin{split}
\var[\relayindex][\blockindex]\left(\signal[\sourceindex]{\blockindex-1}\right)&=\AFconstant[\relayindex][\blockindex].\othervar[\relayindex][\blockindex-1]\\
&=\AFconstant[\relayindex][\blockindex].\left(\sqrt{\pathgain[\sourceindex]{\relayindex}}\,\var[\sourceindex][\blockindex-1]\left(\signal[\sourceindex]{\blockindex-1}\right)+\noise[\relayindex]\right)
\end{split}
\end{equation}
wherein the amplification factor $\AFconstant[\relayindex]$ is chosen so as to satisfy the proper relay's power constraint. Each relay node has its own power constraint as ${\expectation[{\var[\relayindex]^2}]\le\powermax[\relayindex]{}}$. We assume all channels are slow time{-}varying. So that we can assume $\AFconstant[\relayindex][\blockindex]=\AFconstant[\relayindex]$ and
\begin{align}\label{eq:AFfactorlimit1M1}
|\AFconstant[\relayindex]|^{2}\le\frac{\powermax[\relayindex]{}}{\,\noisePower + \pathgain[\sourceindex]{\relayindex}\powermax[\sourceindex]{}}
\end{align}

As can be seen, if $\noisePower + \pathgain[\sourceindex]{\relayindex}\powermax[\sourceindex]{}\gg\powermax[\relayindex]{}$ the (amplification of the) relay $\relayindex$ is useless. Replacing \equationname~\ref{eq:AFMrelays} into \eqref{eq:Rdestreceive}, the received signal at the destination is:
\begin{multline}\label{eq:AFdestreceiveMrelays}
\hspace{-0.3cm}\othervar[\destindex][\blockindex]\!=\! \sqrt{\pathgain[\sourceindex]{\destindex}}\,\var[\sourceindex][\blockindex]\left(\signal[\sourceindex]{\blockindex}\right)+
\sum_{\relayindex\in\relayset}{|\AFconstant[\relayindex]|\sqrt{\pathgain[\sourceindex]{\relayindex}\pathgain[\relayindex]{\destindex}}\var[\sourceindex][\blockindex\!-\!1]\left(\signal[\sourceindex]{\blockindex\!-\!1}\right)}
\\+\sum_{\relayindex\in\relayset}{|\AFconstant[\relayindex]|\sqrt{\pathgain[\relayindex]{\destindex}}\noise[\relayindex]} + \noise[\destindex]
\end{multline}

The relay nodes do not regenerate any new code, and consequently the complexity of this scheme is low. Since every relay node amplifies whatever it receives, including noise, it is mainly useful in high SNR environments.
Increasing the amplification factor $\AFconstant[]$ increases the noise of inter symbol interference (ISI) at the destination, and also significantly increase the noise of inter carrier interference (ICI) where there is only one antenna at the destination. The relay should thus transmit with an appropriate power where the network is able to adjust the power of the relay. By \equationname~\ref{eq:AFdestreceiveMrelays}, the maximum data rate of the AF scheme is formulated as:
\begin{multline}
\displaystyle\upperbound[AF]{(\sourceindex;\relayset;\destindex)} =  \\ \capacitySymbol\left(\frac{\left(\sqrt{\pathgain[\sourceindex]{\destindex}}+\sum_{\relayindex\in\relayset}{|\AFconstant[\relayindex]|\sqrt{\pathgain[\sourceindex]{\relayindex}\pathgain[\relayindex]{\destindex}}}\right)^2.\powermax[\sourceindex]{}}{\left(1+\sum_{\relayindex\in\relayset}{|\AFconstant[\relayindex]|^{2}\pathgain[\relayindex]{\destindex}}\right)\noisePower}\right)
\end{multline}

%%It is easy to show that the $\displaystyle\upperbound[AF]{(\sourceindex;\relayset;\destindex)}$ outperforms the capacity of the maximal ratio combiner (MRC) technique that is: $\displaystyle\capacitySymbol\left(\SNR[\sourceindex]{\destindex}+\sum_{\relayindex\in\relayset}{\frac{\SNR[\sourceindex]{\relayindex}\,\SNR[\relayindex]{\destindex}}{\SNR[\sourceindex]{\relayindex}+\SNR[\relayindex]{\destindex}}}\right)$.

Reference \cite[p. 46]{LiuBook09} demonstrates that under the condition
%%\begin{equation}
 $|\AFconstant[\relayindex]| \le \SNR[\sourceindex]{\relayindex}$,
%%\end{equation}
the $\displaystyle\upperbound[AF]{(\sourceindex;\relayset;\destindex)}$ outperforms the capacity of the maximal ratio combining (MRC) technique that is:
\begin{equation}\label{eq:MRC1M1}
\displaystyle\upperbound[MRC]{(\sourceindex;\relayset;\destindex)}=\capacitySymbol\left(\SNR[\sourceindex]{\destindex}+\sum_{\relayindex\in\relayset}{\frac{\SNR[\sourceindex]{\relayindex}\,\SNR[\relayindex]{\destindex}}{\SNR[\sourceindex]{\relayindex}+\SNR[\relayindex]{\destindex}}}\right)
\end{equation}

\section{Case study}\label{sec:casestudy1M1}

\begin{figure}
  \begin{center}
    \psfrag{s}[c][c][0.9]{$\sourceindex$}
    \psfrag{r1}[c][c][0.9]{$\relayindex$}
    \psfrag{r2}[c][c][0.9]{$\relayindex$}
    \psfrag{d}[c][c][0.9]{$\destindex$}
    \psfrag{d1}[c][c][1]{$d_{\sourceindex\relayindex}$}
    \psfrag{h}[c][c][1]{$d_{\relayindex}$}
    \psfrag{1}[c][b][1]{$d_{\sourceindex\destindex}$}
    \includegraphics[width=0.55\columnwidth]{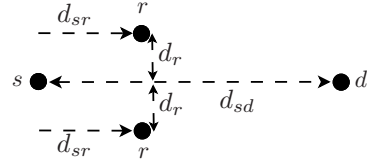}
  \end{center}
  \caption{Point{-}to{-}point two relayed communication network scenario.}
  \label{fig:Net_R1M1}
\end{figure}

Here, we illustrate the various outer region bounds of a point{-}to{-}point multiple relayed communication with Gaussian channels wherein the transmitter $\sourceindex$, two relays $\relayindex$, and the destination $\destindex$ are located as depicted in \figurename~\ref{fig:Net_R1M1}. We assume a vertical equidistance of $d_{\relayindex}$ between two relays and the $\sourceindex\to\destindex$ direct{-}link.
The path condition values ${\pathgain[\sourceindex]{\relayindex}=\pathgain[\relayindex]{\destindex}=\pathgain[\sourceindex]{\destindex}=1}$ are scaled with respect to Notation~2 in \cite{Shams13}.
First, we experiment a high SNR environment and suppose that the source and destination are located at a distance of $d_{\sourceindex\destindex}=1\m$, and the relays are located in vertical distances of $d_{\relayindex}=0.1\m$ and they are simultaneously and horizontally moving from $d_{\sourceindex\relayindex}=-0.5\m$ to $d_{\sourceindex\relayindex}=1.5\m$. \figurename~\ref{fig:high_1M1} plots various data rates for $\powermax[\sourceindex]{}=\powermax[\relayindex]{}=100\mW$, and $\noisePower=1\muW$. The curve labeled AF shows the outer region of AF strategy with the largest possible scaling factor $\AFconstant[\relayindex]$ in \equationname~\ref{eq:AFfactorlimit1M1}.
We studied a point{-}to{-}point communication relayed by only one intermediate node in \cite{Shams13} and here we use the same parameters.
The curve labeled $\corrsr[]{}$ plots a particular value of the correlation coefficient which is the same as \cite[Fig. 4]{Shams13}.
%%Like the one{-}relay case, as the relays moves toward the transmitter the DF strategy shows better performance, and the CF technique instead performs better where the relays are closer to the destination.
As the relays moves toward the destination the received signal at the relay becomes weaker and this significantly reduces AF data rate.
\begin{figure}
  \begin{center}
    \psfrag{xaxis}[c][c][0.9]{$d_{\sourceindex\relayindex} \left[\m\right]$}
    \psfrag{yaxis}[c][c][0.9]{Rate $\left[\bpsHz\right]$}
    \includegraphics[width=0.8\columnwidth]{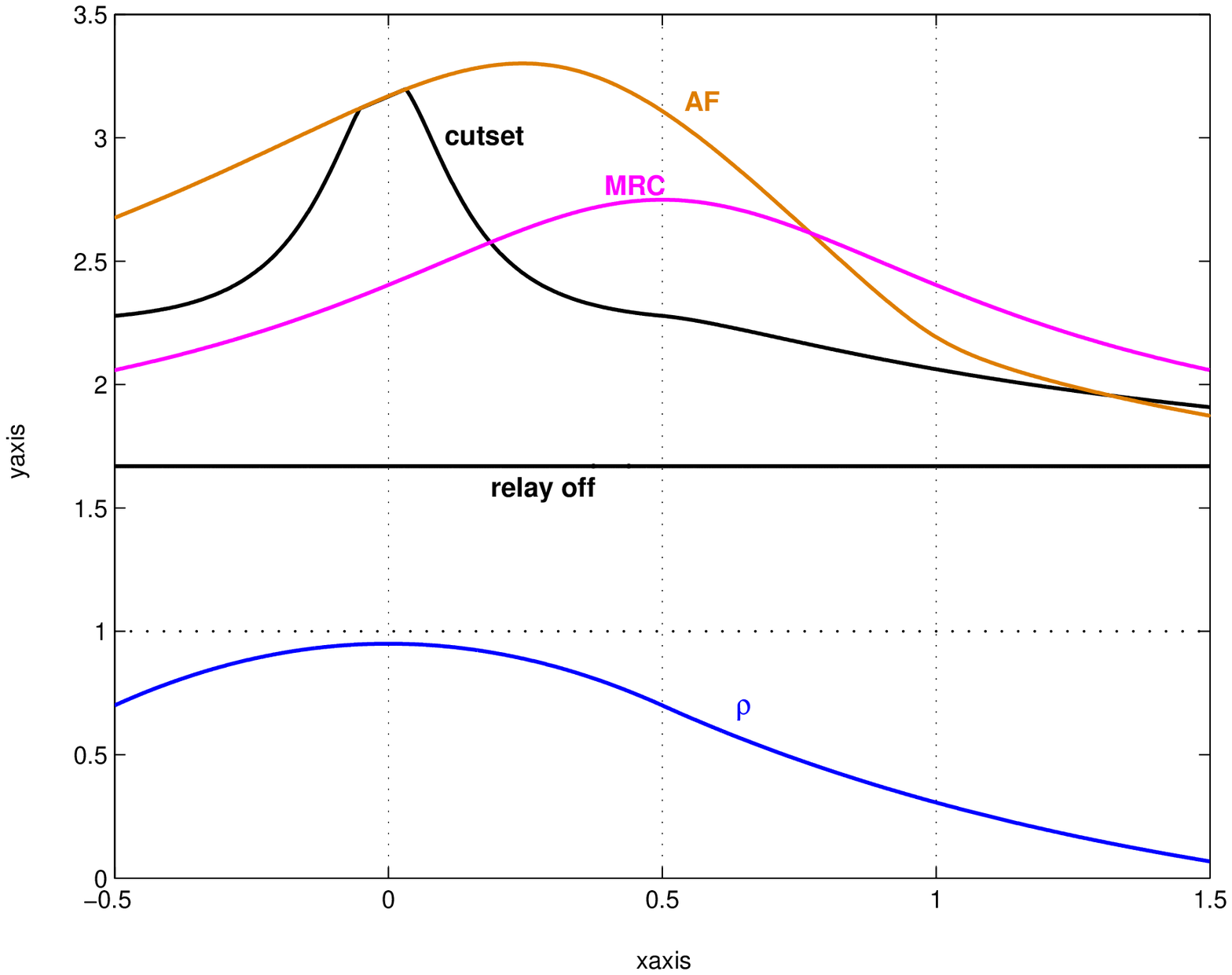}
  \end{center}
  \caption{Rates for two relays with $\powermax[\sourceindex]{}=\powermax[\relayindex]{}=100\mW$, $\noisePower=1\muW$,  $d_{\sourceindex\destindex}=1\m$, and $d_{\relayindex}=0.1\m$.}
  \label{fig:high_1M1}
\end{figure}
The comparison of \figurename~\ref{fig:high_1M1} to \cite[Fig. 4]{Shams13} reveals that when the relays are close to the transmitter, the data rates of two{-}relays network exhibits almost $25\%$ higher performance. On the contrary, when the relays are close to the destination, data rate is equal to that of the one{-}relay network in \cite[Fig. 4]{Shams13}. This is because, in such a situation the maximum cutset data rate of coding techniques is achieved by one relay only.

Now, we consider a point{-}to{-}point relayed connection in a low SNR regime. The transmitter and the destination are placed at a distance of $d_{\sourceindex\destindex}=500\m$, and the relays are simultaneously and horizontally moving in a range of ${d_{\sourceindex\relayindex}=-100\div600\m}$ with vertical distance of $d_{\relayindex}=10\m$.
%%For the sake of readability of the figure, we suppose every channels have a bandwidth of ${\carrierSpacing=70\kHz}$.
\figurename~\ref{fig:low_1M1} plots various data rates for $\powermax[\sourceindex]{}=\powermax[\relayindex]{}=100\mW$, $\noisePower=1\muW$, and the same $\AFconstant[\relayindex]$ as the previous simulation.

\begin{figure}
  \begin{center}
    \psfrag{xaxis}[c][c][0.9]{$d_{\sourceindex\relayindex} \left[\m\right]$}
    \psfrag{y1axis}[c][c][0.9]{Rate $\left[\bpsHz\right]$}
    \psfrag{y2axis}[c][c][0.9]{\color{blue}Correlation coefficient $\corrsr[]{}$}
    \includegraphics[width=0.85\columnwidth]{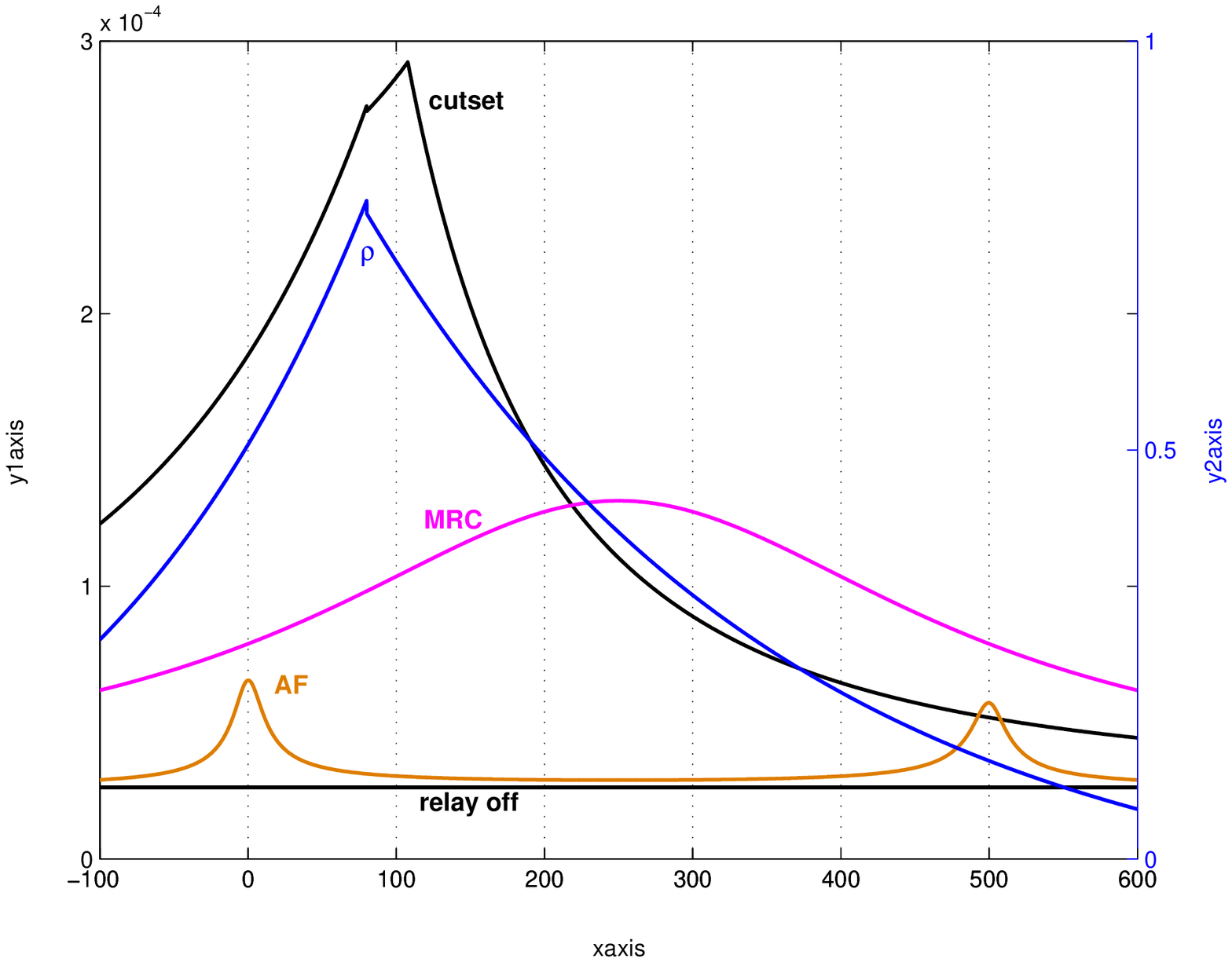}
  \end{center}
  \caption{Rates for one relay with $\powermax[\sourceindex]{}=\powermax[\relayindex]{}=100\mW$, $\noisePower=1\muW$, ${d_{\sourceindex\destindex}=500\m}$, and $d_{\relayindex}=10\m$.}
  \label{fig:low_1M1}
\end{figure}

We draw a different experimental function for correlation value which is the curve labeled $\corrsr[]{}$, and that is the same as the correlation function in \cite[Fig. 5]{Shams13}.
From \figurename~\ref{fig:low_1M1}, it is clearly derived that, like in the one{-}relay network, the AF technique is not quite useful in a low SNR network, whereas  MRC technique performs much better than AF.
Our experiments in a given scenario with different parameters reveals that reducing the amplification factor does not increase the AF data rate.
The comparison between \figurename~\ref{fig:low_1M1} and \cite[Fig. 5]{Shams13} shows that the data rates of coding techniques in two{-}relays network is almost $50\%$ higher than those of one{-}relay network. This means that adding a new relay in a low SNR network is much more useful than that in a high SNR network.
Like one{-}relay case in \cite[Fig. 5]{Shams13}, in a low SNR scene, the cutset technique shows significant higher rate than the relay off mode.

\section{Summary}\label{sec:conclusion1M1}

We studied point{-}to{-}point communications aided by multiple parallel relays with full{-}duplex signaling and AWGN channels.
We focused on finding the maximum achievable capacity applying cutset and AF relay strategies.
First, we showed that, in coding strategy, the maximum data rate is achieved when the output signals of the relays are fully correlated.
The first novel result is that the maximum source to destination cutset flow rate is approached by either only one relay or all relays together.

Like in the single{-}relay networks, the performance of multiple parallel relays channels basically depends upon both the strategy of the relays and their positions.
If all relays are located so as to perfectly receive signals from the source, the maximum capacity is limited by the weakest source{-}to{-}relay channel data rate. In this situation, the others relays can be turned off.
Placing a relay very far from the source node can even decrease the source to destination direct{-}link capacity.
On the other hand, when all relays are located so as to perfectly deliver signals to the destination, the maximum capacity is equal to the multiple{-}access capacity of all relays at the destination. In such a network, adding a new relay may expand the outer region of capacity.

\setlength{\IEEEilabelindent}{2\IEEEiedmathlabelsep}
\IEEEusemathlabelsep

\bstctlcite{mybibfile:BSTcontrol}
\bibliography{IEEEabrv,mybibfile}

\end{document}